\documentclass[a4paper,11pt,DIV=12,abstract=true,bibliography=totoc]{scrartcl}
\synctex=1


\usepackage[T1]{fontenc}
\usepackage[utf8]{inputenc}
\usepackage[english]{babel}
\usepackage{amssymb,amsfonts,amsmath,amsthm}
\usepackage{mathtools}
\usepackage[hidelinks]{hyperref}
\usepackage{authblk}
\usepackage{booktabs}
\usepackage{subcaption}
\usepackage{lmodern}
\usepackage{enumitem}
\usepackage{bbm}
\usepackage{pgfplots}

\mathtoolsset{showonlyrefs}
\pgfplotsset{compat=1.9}
\pdfoptionpdfminorversion=5
\newlength\figureheight
\newlength\figurewidth
\usepgfplotslibrary{external}
\tikzexternalize[prefix=pdffigures/,up to date check=simple]
\captionsetup{font=small,labelfont=bf}


\newtheorem{theorem}{Theorem}[section]
\newtheorem{lemma}[theorem]{Lemma}
\newtheorem{proposition}[theorem]{Proposition}
\newtheorem{corollary}[theorem]{Corollary}
\theoremstyle{definition}
\newtheorem{definition}[theorem]{Definition}
\newtheorem{example}[theorem]{Example}
\newtheorem{remark}[theorem]{Remark}

\numberwithin{equation}{section}
\numberwithin{table}{section}
\numberwithin{figure}{section}


\newcommand{\R}{\mathbb{R}}

\newcommand{\transp}{\top}
\newcommand{\rmd}{\mathrm{d}}
\newcommand{\rme}{\mathrm{e}}
\newcommand{\ones}{\mathbf{1}_d}
\newcommand{\calA}{\mathcal{A}}
\newcommand{\calF}{\mathcal{F}}
\newcommand{\calG}{\mathcal{G}}
\newcommand{\calK}{\mathcal{K}}
\newcommand{\calN}{\mathcal{N}}
\newcommand{\gam}[2]{Q^{#1}_{#2}}
\newcommand{\muhat}[2]{m^{#1}_{#2}}
\newcommand{\tstrut}{\rule{0pt}{2.6ex}}
\newcommand{\bstrut}{\rule[-0.9ex]{0pt}{0pt}}

\DeclareMathOperator{\E}{\mathbb{E}}


\begin{document}

\title{Robust Utility Maximization in a Multivariate Financial Market with Stochastic Drift}

\author[1]{J\"{o}rn Sass\thanks{\href{mailto:sass@mathematik.uni-kl.de}{sass@mathematik.uni-kl.de}}}
\author[1]{Dorothee Westphal\thanks{\href{mailto:westphal@mathematik.uni-kl.de}{westphal@mathematik.uni-kl.de}}}
\affil[1]{Department of Mathematics, Technische Universit\"{a}t Kaiserslautern}

\date{May~30, 2021}

\maketitle

\begin{abstract}
	We study a utility maximization problem in a financial market with a stochastic drift process, combining a worst-case approach with filtering techniques. Drift processes are difficult to estimate from asset prices, and at the same time optimal strategies in portfolio optimization problems depend crucially on the drift.
	We approach this problem by setting up a worst-case optimization problem with a time-dependent uncertainty set for the drift. Investors assume that the worst possible drift process with values in the uncertainty set will occur. This leads to local optimization problems, and the resulting optimal strategy needs to be updated continuously in time.
	We prove a minimax theorem for the local optimization problems and derive the optimal strategy. Further, we show how an ellipsoidal uncertainty set can be defined based on filtering techniques and demonstrate that investors need to choose a robust strategy to be able to profit from additional information.
	\medskip
	
	\noindent
	\textit{Keywords: }Portfolio optimization; drift uncertainty; robust strategies; stochastic filtering; minimax theorems
	
	\smallskip
	
	\noindent
	\textit{2010 Mathematics Subject Classification:} 91G10; 91B16; 93E20
\end{abstract}

\section{Introduction}

Financial market models are usually prone to statistical estimation errors, incomplete information and other reasons for model misspecifications. Especially the drift of asset prices is notoriously difficult to estimate from historical data. Drift processes tend to fluctuate randomly over time, and even for estimating a constant drift with a reasonable degree of precision one needs very long time series, an observation already made by Merton~\cite{merton_1980}. At the same time, trading strategies in portfolio optimization problems depend crucially on the drift. Strategies that are determined based on a misspecified model can therefore perform rather badly in the true financial market setting, see Chopra and Ziemba~\cite{chopra_ziemba_1993} and Kan and Zhou~\cite{kan_zhou_2007}.

There are two main approaches to deal with these problems. On the one hand, it is crucial to approximate the true model as accurately as possible using all the information available. When estimating the hidden drift process the best estimate in a mean-square sense is the conditional mean of the drift given the available information, the so-called \emph{filter}. Observations usually include the stock returns but can also involve external sources of information like news, company reports or ratings. In fact, Merton~\cite{merton_1980} points out that due to the difficulty of estimating expected returns, sources of information other than time series data of market returns are needed to improve estimates. Filtering techniques thus are a way to reduce uncertainty about model parameters.
On the other hand, model uncertainty can be approached by setting up \emph{worst-case optimization} problems. Instead of working with just one particular model, one specifies a range of possible models and tries to optimize the objective, given that for any chosen strategy the worst of all possible models will occur. This leads to robust strategies, i.e.\ strategies that are less vulnerable to the specific choice of the model.

In this paper we combine a worst-case approach with filtering techniques for a utility maximization problem in a financial market with stochastic drift. This is a follow-up paper on Sass and Westphal~\cite{sass_westphal_2020} where a worst-case utility maximization problem for a financial market with constant drift is investigated. In~\cite{sass_westphal_2020} we work with a Black--Scholes market and address an optimization problem of the form
\begin{equation}\label{eq:basic_robust_problem_constant_drift}
	\adjustlimits\sup_{\pi\in\calA_h(x_0)}\inf_{\mu\in K}\E_\mu\bigl[U(X^\pi_T)\bigr],
\end{equation}
where $U\colon\R_+\to\R$ is a utility function, $X^\pi_T$ denotes the terminal wealth achieved when using strategy $\pi$, and $\calA_h(x_0)$ is a class of constrained admissible strategies with initial capital $x_0$. The expectation $\E_\mu[\cdot]$ is with respect to a measure under which the drift of the asset returns is constantly equal to $\mu\in\R^d$, with $d$ denoting the number of risky assets in the market. By $K\subseteq\R^d$ we denote a fixed ellipsoid and speak of the \emph{uncertainty set}. The main result in~\cite{sass_westphal_2020} is a representation of the optimal strategy for~\eqref{eq:basic_robust_problem_constant_drift} in the case of power or logarithmic utility and a corresponding minimax theorem.

In the present paper we generalize the results from Sass and Westphal~\cite{sass_westphal_2020} to a financial market with a stochastic drift process and time-dependent uncertainty sets $K$. This is motivated by the idea that information about the hidden drift process, as e.g.\ obtained from filtering techniques, might change over time. A surplus of information should then be reflected in a smaller uncertainty set.
More precisely, we assume that under the reference measure returns follow the dynamics
\[ \rmd R_t = \nu_t\,\rmd t+\sigma\,\rmd W_t, \]
where the reference drift $(\nu_t)_{t\in[0,T]}$ is adapted to the investor filtration $(\calG_t)_{t\in[0,T]}$ representing the investor's information. This is justified by a separation principle where one performs a filtering step before solving the optimization problem, i.e.\ $(\nu_t)_{t\in[0,T]}$ represents the investor's filter for the drift process. We introduce a time-dependent uncertainty set $(K_t)_{t\in[0,T]}$ that is a set-valued stochastic process adapted to $(\calG_t)_{t\in[0,T]}$, meaning that the investor knows the realization of $K_t$ at time $t$. In our case, $K_t$ is an ellipsoid in $\R^d$.

It is not obvious how to set up a worst-case optimization problem in this time-dependent setting. The problem lies in the fact that the realization of the uncertainty sets $(K_t)_{t\in[0,T]}$ is not known initially but gets revealed over time. A worst-case drift process $(\mu_t)_{t\in[0,T]}$ is characterized by being the worst one with the property that $\mu_t\in K_t$ for all $t\in[0,T]$. However, optimization with respect to this worst-case drift process is not feasible for an investor since it is not known initially. Instead, it makes sense to consider the following local approach. For any fixed $t\in[0,T]$, the current uncertainty set $K_t$ is known. Given this $K_t$, investors take model uncertainty into account by assuming that in the future the worst possible drift process having values in $K_t$ will be realized, i.e.\ the worst drift process from the class
\[ \calK^{(t)}=\bigl\{\mu^{(t)}=(\mu^{(t)}_s)_{s\in[t,T]} \,\big|\, \mu^{(t)}_s\in K_t \text{ and } \mu^{(t)}_s \text{ is }\calG_t\text{-measurable for each }s\in[t,T]\bigr\}. \]
Investors then solve at each time $t\in[0,T]$, given $X_t^\pi =x>0$, the local optimization problem
\begin{equation}\label{eq:basic_robust_problem_local}
	\adjustlimits \sup_{\pi^{(t)}\in\calA_h(t,x)} \inf_{\mu^{(t)}\in \calK^{(t)}} \E_{\mu^{(t)}}\Bigl[U\bigl(X^{t,x,\pi^{(t)}}_T\bigr)\Bigr],
\end{equation}
leading to an optimal strategy $(\pi^{(t),*}_s)_{s\in[t,T]}$. Here $X^{t,x,\pi^{(t)}}_T$ is the terminal wealth for starting at time $t$ with wealth $x$ and following strategy $\pi^{(t)}$. In our continuous-time setting this decision will be revised as soon as $K_t$ changes, possibly continuously in time. The realized optimal strategy of the investor is then given by $\pi^*_t=\pi^{(t),*}_t$ for all $t\in[0,T]$.

The focus of this paper lies in carrying over the results for the robust utility maximization problem with constant drift from Sass and Westphal~\cite{sass_westphal_2020} to the more general model described above. We determine the optimal strategy for~\eqref{eq:basic_robust_problem_local} and prove a local minimax theorem in analogy to Sass and Westphal~\cite[Thm.~3.12]{sass_westphal_2020}.
The difficulty here lies in proving that a worst-case drift exists such that the optimization of the robust objective function is equivalent to optimization with respect to the worst-case drift. The key result of our work is the duality we derive in Theorem~\ref{thm:minimax_theorem} which ensures existence of such a worst-case drift process. Moreover, we are able to specify the form of this drift process explicitly which is what makes it possible to also compute the optimal strategy. This is not an obvious result since duality as in that theorem can only be guaranteed under strong conditions. Results from the literature, e.g.\ from Quenez~\cite{quenez_2004}, do not carry over directly to our setting since the constraint that we put on the admissible strategies leads to a more complicated structure of attainable terminal wealths.

We then show how the time-dependent uncertainty set $(K_t)_{t\in[0,T]}$ can be defined based on the filter $\muhat{}{t}=\E[\mu_t\,|\,\calG_t]$ for various investor filtrations $(\calG_t)_{t\in[0,T]}$. The construction is motivated by confidence regions.
Finally, we compare the optimal strategies for different investor filtrations $(\calG_t)_{t\in[0,T]}$ and investigate which effect a surplus of information has on their performance. By means of a numerical simulation we demonstrate that investors do need to account for model uncertainty by choosing a robustified strategy $\pi^*$. When investors rely on the respective filter only, adding more information leads to a smaller worst-case expected utility since the naive strategy that relies only on the filter is very vulnerable to model misspecifications. Investors need to robustify their strategy by taking model uncertainty into account to be able to profit from additional information. This effect can also be understood as an overconfidence of experts as studied empirically by Heath and Tversky~\cite{heath_tversky_1991}.

\bigskip

Model uncertainty, also called \emph{Knightian uncertainty} in reference to the seminal book by Knight~\cite{knight_1921}, has been addressed in numerous papers. Gilboa and Schmeidler~\cite{gilboa_schmeidler_1989} and Schmeidler~\cite{schmeidler_1989} formulate rigorous axioms on preference relations that account for risk aversion and uncertainty aversion. A robust utility functional in their sense is a mapping
\[ X\mapsto \inf_{Q\in\mathcal{Q}}\E_Q\bigl[U(X)\bigr], \]
where $U$ is a utility function and $\mathcal{Q}$ a convex set of probability measures.
Chen and Epstein~\cite{chen_epstein_2002} give a continuous-time extension of this multiple-priors utility.
Optimal investment decisions under such preferences are investigated in Quenez~\cite{quenez_2004} and Schied~\cite{schied_2005}, building up on Kramkov and Schachermayer~\cite{kramkov_schachermayer_1999, kramkov_schachermayer_2003}. An extension of those results by means of a duality approach is given in Schied~\cite{schied_2007}.
Pflug et al.~\cite{pflug_pichler_wozabal_2012} study risk minimization under model uncertainty.
Papers addressing drift uncertainty in a financial market are Garlappi et al.~\cite{garlappi_uppal_wang_2007} and Biagini and P\i nar~\cite{biagini_pinar_2017}, among others. The latter also focuses on ellipsoidal uncertainty sets. Uncertainty about both drift and volatility is investigated in a recent paper by Pham et al.~\cite{pham_wei_zhou_2018}.

Filtering techniques play a crucial role in utility maximization problems under partial information. There are essentially two models for the drift process that lead to finite-dimensional filters. In the first one the drift is modelled as an Ornstein--Uhlenbeck process, in the second one as a continuous-time Markov chain. The filters are the well-known Kalman and Wonham filter, respectively, see e.g.\ Elliott et al.~\cite{elliott_aggoun_moore_1995} and Liptser and Shiryaev~\cite{liptser_shiryaev_1974}.

\bigskip

The paper is organized as follows. Since this is a follow-up paper on Sass and Westphal~\cite{sass_westphal_2020} that generalizes results for a financial market with constant drift to one with stochastic drift we recap the main results of \cite{sass_westphal_2020} in Section~\ref{sec:recap_of_results_for_constant_drift}. For the convenience of the reader the later sections then refer to Section~2. In Section~\ref{sec:generalized_financial_market_model} we set up the generalized financial market model with stochastic drift process and state our local worst-case optimization problem. Section~\ref{sec:solution_to_the_robust_utility_maximization_problem} solves this problem in several steps. We provide representations of the optimal strategy that will be realized by an investor whose information about the drift process changes continuously in time and of the worst-case drift process. Further, we prove a minimax theorem for the local optimization problems. In Section~\ref{sec:construction_of_uncertainty_sets_via_filters} we explain how filtering techniques can be used to set up time-dependent uncertainty sets, motivated by confidence regions. We also compare the performance of the optimal strategies for different investor filtrations by means of a numerical simulation.

\section{Recap of Results for Constant Drift}\label{sec:recap_of_results_for_constant_drift}

This paper builds up on Sass and Westphal~\cite{sass_westphal_2020} and generalizes results for a financial market with constant drift to a model with a stochastic drift process. For the convenience of the reader we recap the main results of Sass and Westphal~\cite{sass_westphal_2020} in this section. This is in order to show in the following sections what carries over and where we have to provide new arguments. By referring to the cited results in this section, this can be conveniently done in a self-contained way.

\subsection{Financial market model}

The paper~\cite{sass_westphal_2020} deals with a continuous-time financial market with one risk-free and various risky assets. Let $T>0$ denote some finite investment horizon and let $(\Omega, \calF, \mathbb{F}, \mathbb{P})$ be a filtered probability space where the filtration $\mathbb{F}=(\calF_t)_{t\in[0,T]}$ satisfies the usual conditions. All processes are assumed to be $\mathbb{F}$-adapted.
The risk-free asset $S^0$ is of the form $S^0_t=\rme^{rt}$, $t\in[0,T]$, where $r\in\R$ is the deterministic risk-free interest rate.
Aside from the risk-free asset, investors can also invest in $d\geq 2$ risky assets. Their prices are given by constant initial prices $S_0^i >0$ and evolve according to $\rmd S_t^i = S_t^i \,\rmd R_t^i$, $i=1, \ldots, d$, where the return process $R=(R^1,\dots,R^d)^\transp$ is defined by
\[ \rmd R_t = \nu\,\rmd t + \sigma\,\rmd W_t, \quad R_0=0, \]
where $W=(W_t)_{t\in[0,T]}$ is an $m$-dimensional Brownian motion under $\mathbb{P}$ with $m\geq d$. Further, $\nu\in\R^d$ and $\sigma\in\R^{d\times m}$, where it is assumed that $\sigma$ has full rank equal to $d$.

Model uncertainty is introduced by assuming that the true drift of the stocks is only known to be an element of some set $K\subseteq\R^d$ with $\nu\in K$ and that investors want to maximize their worst-case expected utility when the drift takes values within $K$. The value $\nu$ can be thought of as an estimate for the drift that was for instance obtained from historical stock prices. Changing the drift from $\nu$ to some $\mu\in K$ can be expressed by a change of measure. For this purpose, let the process $(Z^\mu_t)_{t\in[0,T]}$ be defined by
\[ Z^\mu_t = \exp\Bigl(\theta(\mu)^\transp W_t -\frac{1}{2}\lVert\theta(\mu)\rVert^2 t\Bigr), \]
where $\theta(\mu)=\sigma^\transp(\sigma\sigma^\transp)^{-1}(\mu-\nu)$. One can then define a new measure $\mathbb{P}^\mu$ by setting $\frac{\rmd \mathbb{P}^\mu}{\rmd \mathbb{P}} = Z^\mu_T$. Note that since $\theta(\mu)$ is a constant, the process $(Z^\mu_t)_{t\in[0,T]}$ is a strictly positive martingale. Therefore, $\mathbb{P}^\mu$ is a probability measure that is equivalent to $\mathbb{P}$ and it follows from Girsanov's Theorem that the process $(W^\mu_t)_{t\in[0,T]}$, defined by $W^\mu_t = W_t-\theta(\mu)t$, is a Brownian motion under $\mathbb{P}^\mu$. The return dynamics can therefore be rewritten as
\begin{equation*}
	\rmd R_t = \nu\,\rmd t + \sigma\,\rmd W_t = \nu\,\rmd t + \sigma\bigl(\rmd W^\mu_t+\theta(\mu)\,\rmd t\bigr) = \mu\,\rmd t + \sigma\,\rmd W^\mu_t,
\end{equation*}
hence a change of measure from $\mathbb{P}$ to $\mathbb{P}^\mu$ corresponds to changing the drift in the return dynamics from $\nu$ to $\mu$. In the following, let $\E_\mu[\cdot]$ denote the expectation under measure $\mathbb{P}^\mu$ and $\E[\cdot]=\E_\nu[\cdot]$ the expectation under the reference measure $\mathbb{P}=\mathbb{P}^\nu$.

An investor's trading decisions are described by a self-financing trading strategy $(\pi_t)_{t\in[0,T]}$ with values in $\R^d$. The entry $\pi^i_t$, $i=1, \dots, d$, is the proportion of wealth invested in asset $i$ at time $t$. The corresponding wealth process $(X^\pi_t)_{t\in[0,T]}$ given initial wealth $x_0>0$ can then be described by the stochastic differential equation
\[ \rmd X^\pi_t = X^\pi_t\Bigl( r\,\rmd t + \pi_t^\transp(\mu-r\ones)\,\rmd t + \pi_t^\transp \sigma\,\rmd W^\mu_t \Bigr), \quad X^\pi_0=x_0, \]
for any $\mu\in K$, where $\ones$ denotes the $d$-dimensional vector with all entries equal to~$1$.
Trading strategies are required to be $\mathbb{F}^R$-adapted, where $\mathbb{F}^R=(\calF^R_t)_{t\in[0,T]}$ for $\calF^R_t=\sigma((R_s)_{s\in[0,t]})$. The basic admissibility set is defined as
\[ \calA(x_0) = \biggl\{(\pi_t)_{t\in[0,T]} \;\bigg|\; \pi \text{ is } \mathbb{F}^R\text{-adapted}, \; X^\pi_0=x_0, \; \E_\mu\biggl[\int_0^T \lVert\sigma^\transp\pi_t\rVert^2\,\rmd t\biggr]<\infty \text{ for all } \mu\in K\biggr\}. \]
The paper Sass and Westphal~\cite{sass_westphal_2020} considers investors with power or logarithmic utility, using the notation $U_\gamma\colon\R_+\to\R$ for $\gamma\in(-\infty,1)$, where $U_\gamma(x)=\frac{x^\gamma}{\gamma}$ for $\gamma\neq 0$ denotes power utility and $U_0(x)=\log(x)$ is the logarithmic utility function.
Investors with a robust approach to the portfolio optimization problem would try to maximize
\[ \inf_{\mu\in K} \E_\mu\bigl[U_\gamma(X^\pi_T)\bigr] \]
among the admissible strategies. It is quite straightforward to show that as soon as $r\ones\in K$, the strategy $(\pi_t)_{t\in[0,T]}$ with $\pi_t=0$ for all $t\in[0,T]$ is optimal in the class of admissible strategies $\calA(x_0)$. This observation is proven in Sass and Westphal~\cite[Prop.~2.1]{sass_westphal_2020} and implies that as the level of uncertainty exceeds a certain threshold, it will be optimal for investors to not invest anything in the stocks and everything in the risk-free asset. For finding less conservative strategies that still take into account model uncertainty a constraint on the admissible strategies is introduced that prevents a pure bond investment. Consider for some $h>0$ the admissibility set
\[ \calA_h(x_0)=\bigl\{ \pi\in\calA(x_0) \,\big|\, \langle\pi_t,\ones\rangle = h \text{ for all } t\in[0,T] \bigr\}. \]
Taking $h=1$ would imply that investors are not allowed to invest anything in the risk-free asset. They must then distribute all of their wealth among the risky assets.
Sass and Westphal~\cite{sass_westphal_2020} study the case where the uncertainty set is an ellipsoid in $\R^d$ centered around the reference parameter $\nu$, i.e.\
\begin{equation}\label{eq:uncertainty_ellipsoid}
	K=\bigl\{ \mu\in\R^d \,\big|\, (\mu-\nu)^\transp \Gamma^{-1}(\mu-\nu) \leq \kappa^2 \bigr\}.
\end{equation}
Here, $\kappa>0$, $\nu\in\R^d$, and $\Gamma\in\R^{d\times d}$ is symmetric and positive definite. For $\Gamma=I_d$ one simply gets a ball in the Euclidean norm with radius $\kappa$ and center $\nu$. Another special case discussed in the literature is $\Gamma=\sigma\sigma^\transp$, see e.g.\ Biagini and P\i nar~\cite{biagini_pinar_2017}.
The value of $\kappa$ determines the size of the ellipsoid. Higher values of $\kappa$ correspond to more uncertainty about the true drift. The robust utility maximization problem over the constrained strategies $\pi\in\calA_h(x_0)$ can then be written in the form
\begin{equation}\label{eq:robust_problem}
	\adjustlimits \sup_{\pi\in\calA_h(x_0)} \inf_{\mu\in K} \E_\mu\bigl[U_\gamma(X^\pi_T)\bigr].
\end{equation}

\subsection{Solution of the non-robust problem}

To solve the optimization problem in~\eqref{eq:robust_problem} Sass and Westphal~\cite{sass_westphal_2020} first address the non-robust constrained utility maximization problem under a fixed parameter $\mu\in\R^d$. For better readability the following notation is introduced.

\begin{definition}\label{def:matrix_A_vector_c}
	Denote with $D$ the matrix
	\[ D =
	\begin{pmatrix}
		1	&	& 0	& -1 \\
			&\ddots	&	&\vdots \\
		0	&	& 1	& -1
	\end{pmatrix}\in\R^{(d-1)\times d} \]
	and define the matrix $A\in\R^{d\times d}$ and the vector $c\in\R^d$ by
	\begin{align*}
		A &= D^\transp(D\sigma\sigma^\transp D^\transp)^{-1}D, \\
		c &= e_d-D^\transp(D\sigma\sigma^\transp D^\transp)^{-1}D\sigma\sigma^\transp e_d = (I_d-A\sigma\sigma^\transp)e_d.
	\end{align*}
\end{definition}

The following result gives the optimal strategy for the non-robust problem and can be found in Sass and Westphal~\cite[Prop.~3.4]{sass_westphal_2020}.

\begin{proposition}\label{prop:optimal_strategy_non-robust}
	Let $\mu\in\R^d$. Then the optimal strategy for the optimization problem
	\[ \sup_{\pi\in\calA_h(x_0)} \E_\mu\bigl[U_\gamma(X^\pi_T)\bigr] \]
	is the strategy $(\pi_t)_{t\in[0,T]}$ with
	\[ \pi_t = \frac{1}{1-\gamma}A\mu +hc \]
	for all $t\in[0,T]$, with $A$ and $c$ as in Definition~\ref{def:matrix_A_vector_c}.
\end{proposition}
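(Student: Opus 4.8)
The plan is to reduce the dynamic stochastic control problem to a deterministic, finite-dimensional concave optimisation, and then to solve the latter explicitly using the structure of $D$.

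\textbf{Reduction to a static problem.} Solving the linear SDE for $X^\pi$ explicitly under $\mathbb{P}^\mu$, I would write, for $\gamma\neq0$,
\[ (X^\pi_T)^\gamma = x_0^\gamma\, M_T\,\exp\!\left(\gamma\int_0^T g(\pi_t)\,\rmd t\right), \]
where $g(\pi)=r+\pi^\transp(\mu-r\ones)-\tfrac12(1-\gamma)\pi^\transp\sigma\sigma^\transp\pi$ and $M_T=\exp\bigl(\gamma\int_0^T\pi_t^\transp\sigma\,\rmd W^\mu_t-\tfrac12\gamma^2\int_0^T\lVert\sigma^\transp\pi_t\rVert^2\,\rmd t\bigr)$ is a nonnegative local martingale, hence a supermartingale with $\E_\mu[M_T]\le1$. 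Since $1-\gamma>0$ and $\sigma\sigma^\transp$ is positive definite, $g$ is strictly concave and attains a unique maximum $g^*=g(\pi^*)$ over the affine set $\{\pi\colon\ones^\transp\pi=h\}$; moreover any admissible $\pi$ satisfies the pointwise bound $g(\pi_t)\le g^*$ because $\ones^\transp\pi_t=h$. For $\gamma\in(0,1)$ this bound together with $\E_\mu[M_T]\le1$ gives $\E_\mu[(X^\pi_T)^\gamma]\le x_0^\gamma\rme^{\gamma Tg^*}$, with equality for the constant strategy $\pi\equiv\pi^*$, for which $M$ is a true martingale. Thus the problem collapses to maximising $g$, equivalently $f(\pi)=\pi^\transp(\mu-r\ones)-\tfrac12(1-\gamma)\pi^\transp\sigma\sigma^\transp\pi$, subject to $\ones^\transp\pi=h$; the case $\gamma=0$ is handled identically starting from $\E_\mu[\log X^\pi_T]=\log x_0+\E_\mu\bigl[\int_0^T g(\pi_t)\,\rmd t\bigr]$.

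\textbf{Solving the constrained quadratic.} The constraint set can be parametrised through $D$: since $D\ones=0$ and $D$ has full row rank $d-1$, one has $\operatorname{im}(D^\transp)=\ones^\perp$, and as $\ones^\transp(he_d)=h$ every feasible $\pi$ equals $he_d+D^\transp y$ for a unique $y\in\R^{d-1}$. Substituting turns $f$ into an unconstrained strictly concave quadratic in $y$, whose first-order condition
\[ (1-\gamma)\,D\sigma\sigma^\transp D^\transp y = D(\mu-r\ones)-(1-\gamma)h\,D\sigma\sigma^\transp e_d \]
is uniquely solvable because $D\sigma\sigma^\transp D^\transp$ is positive definite ($\sigma\sigma^\transp$ positive definite and $D^\transp$ injective). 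Re-inserting $y$ and recalling $A=D^\transp(D\sigma\sigma^\transp D^\transp)^{-1}D$ and $c=(I_d-A\sigma\sigma^\transp)e_d$ gives
\[ \pi^* = he_d+\tfrac1{1-\gamma}A(\mu-r\ones)-h\,A\sigma\sigma^\transp e_d = \tfrac1{1-\gamma}A(\mu-r\ones)+hc. \]
Finally $A\ones=D^\transp(D\sigma\sigma^\transp D^\transp)^{-1}D\ones=0$ makes the $r\ones$ term vanish, so $\pi^*=\tfrac1{1-\gamma}A\mu+hc$, as claimed.

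\textbf{Main obstacle.} I expect the delicate point to be justifying optimality of the constant strategy when $\gamma<0$: there maximising $U_\gamma$ amounts to minimising $\E_\mu[(X^\pi_T)^\gamma]$, and the pointwise inequality then points against the supermartingale bound $\E_\mu[M_T]\le1$, so the argument above does not close unless $M$ is a genuine martingale. Resolving this requires using the integrability built into $\calA_h(x_0)$ (or a verification argument via the associated Hamilton--Jacobi--Bellman equation) to ensure $\E_\mu[M_T]=1$ for the relevant comparison strategies. By contrast, the remaining steps are routine linear algebra once one observes that $D\sigma\sigma^\transp D^\transp$ is invertible and that $A\ones=0$.
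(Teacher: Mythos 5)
Your static reduction and the linear algebra are correct, and they essentially coincide with the paper's route: the parametrization $\pi = he_d + D^\transp y$ is exactly the paper's transformation of the constrained $d$-dimensional problem into an unconstrained $(d-1)$-dimensional market with parameters $\widetilde{\sigma}=D\sigma$, $\widetilde{\mu}$, $\widetilde{r}$, and the optimal strategy mapped back via $\pi = D^\transp\widetilde{\pi}+he_d$. The difference is how optimality of the resulting constant strategy is established: the paper does not run a bare-hands supermartingale argument but invokes the solution of the unconstrained Merton problem, in the generality needed for all $\gamma<1$ (citing Karatzas et al.\ (1991) and the complete verification in Westphal's thesis), whereas your pointwise bound $g(\pi_t)\le g^*$ combined with $\E_\mu[M_T]\le 1$ settles only $\gamma\in[0,1)$.

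The genuine gap is exactly where you put your finger, $\gamma<0$, but your proposed repair does not close it. The integrability built into $\calA_h(x_0)$, namely $\E_\mu\bigl[\int_0^T\lVert\sigma^\transp\pi_t\rVert^2\,\rmd t\bigr]<\infty$, does \emph{not} imply $\E_\mu[M_T]=1$: square integrability of the integrand only makes $M$ a positive local martingale, hence a supermartingale; a true martingale requires an exponential-moment condition of Novikov type, which admissibility does not provide. Restricting to ``relevant comparison strategies'' with $\E_\mu[M_T]=1$ would therefore prove optimality only on a strictly smaller class than $\calA_h(x_0)$, which is not the claim. The HJB verification you mention suffers from the same defect: for $\gamma<0$ the value function is negative, so $v(t,X^\pi_t)$ is a \emph{negative} local supermartingale, and without a lower bound Fatou's lemma does not upgrade it to a true supermartingale, so the comparison inequality again fails to cover all admissible $\pi$. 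The standard resolution---implicitly the one the paper relies on through its citations---is the martingale/duality method applied in the transformed $(d-1)$-dimensional market: for every admissible strategy the deflated wealth $H_tX^\pi_t$ (with $H$ the pricing kernel built from $\theta=\widetilde{\sigma}^\transp(\widetilde{\sigma}\widetilde{\sigma}^\transp)^{-1}(\widetilde{\mu}-\widetilde{r}\mathbf{1}_{d-1})$) is a positive local martingale, hence a supermartingale, giving the budget constraint $\E_\mu[H_TX^\pi_T]\le x_0$; combining this with the conjugate inequality $U_\gamma(x)\le \widetilde{U}_\gamma(yH_T)+xyH_T$ and minimizing over $y>0$ bounds $\E_\mu[U_\gamma(X^\pi_T)]$ by the value attained by the Merton strategy, for every $\gamma<1$ including $\gamma<0$, with no exponential-moment hypotheses on $\pi$. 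Replacing your $\gamma<0$ discussion by this duality step would make the proof complete.
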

      
In the proof the $d$-dimensional constrained problem is reduced to a $(d-1)$-dimensional unconstrained problem. Using the form of the optimal strategy in the $(d-1)$-dimensional market which is known from Merton~\cite{merton_1969} yields the following representation for the optimal expected utility from terminal wealth. This result is given in Sass and Westphal~\cite[Cor.~3.5]{sass_westphal_2020}.

\begin{corollary}\label{cor:optimal_utility_non-robust}
	Let $\mu\in\R^d$. Then the optimal expected utility from terminal wealth is
	\begin{equation*}
		\begin{aligned}
			\sup_{\pi\in\calA_h(x_0)} &\E_\mu\bigl[U_\gamma(X^\pi_T)\bigr] \\
			&=
			\begin{dcases}
				\frac{x_0^\gamma}{\gamma}\exp\Bigl(\gamma T\Bigl( \widetilde{r}+\frac{1}{2(1-\gamma)}\bigl(\widetilde{\mu}-\widetilde{r}\mathbf{1}_{d-1}\bigr)^\transp(\widetilde{\sigma}\widetilde{\sigma}^\transp )^{-1}\bigl(\widetilde{\mu}-\widetilde{r}\mathbf{1}_{d-1}\bigr)\Bigr)\Bigr), &\gamma\neq 0,\\
				\log(x_0) + \Bigl( \widetilde{r}+\frac{1}{2}\bigl(\widetilde{\mu}-\widetilde{r}\mathbf{1}_{d-1}\bigr)^\transp(\widetilde{\sigma}\widetilde{\sigma}^\transp )^{-1}\bigl(\widetilde{\mu}-\widetilde{r}\mathbf{1}_{d-1}\bigr) \Bigr)T, &\gamma=0,
			\end{dcases}
		\end{aligned}
	\end{equation*}
	where
	\begin{equation}\label{eq:recall_substitution_r_mu_sigma}
		\begin{aligned}
			\widetilde{\sigma}&=D\sigma, \\
			\widetilde{r}&=(1-h)r+he_d^\transp\mu-\frac{1}{2}(1-\gamma)\lVert h\sigma^\transp e_d \rVert^2, \\
			\widetilde{\mu}&=D\mu - h(1-\gamma)D\sigma\sigma^\transp e_d+\widetilde{r}\mathbf{1}_{d-1}.
		\end{aligned}
	\end{equation}
\end{corollary}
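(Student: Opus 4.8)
The plan is to reduce the $d$-dimensional constrained problem to an unconstrained Merton problem in dimension $d-1$ and then read off the value from the classical solution, using Proposition~\ref{prop:optimal_strategy_non-robust} to know in advance that the optimizer is a constant strategy. First I would encode the budget constraint $\langle\pi_t,\ones\rangle=h$ geometrically. Since $D\ones=0$, the columns of $D^\transp$ span the hyperplane $\{v\in\R^d:\langle v,\ones\rangle=0\}$, so every admissible $\pi_t$ can be written as $\pi_t=he_d+D^\transp\eta_t$ with $\eta_t\in\R^{d-1}$ free. Substituting this into the wealth dynamics and using $D\ones=0$ together with $\widetilde{\sigma}=D\sigma$, the return on wealth becomes
\[ \frac{\rmd X^\pi_t}{X^\pi_t}=\bigl(r+h(e_d^\transp\mu-r)+\eta_t^\transp D\mu\bigr)\rmd t+\bigl(he_d^\transp\sigma+\eta_t^\transp\widetilde{\sigma}\bigr)\rmd W^\mu_t, \]
which is the self-financing wealth equation of an unconstrained market with $d-1$ risky assets driven by $\widetilde{\sigma}$, perturbed by the deterministic $he_d^\transp\sigma$ and $h(e_d^\transp\mu-r)$ contributions that come from the forced $e_d$-position. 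Because Proposition~\ref{prop:optimal_strategy_non-robust} guarantees the optimizer is constant in time, it suffices to evaluate the objective along constant $\eta$.

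Next, for constant $\eta$ the exponent $\log(X^\pi_T/x_0)$ is an affine function of the Gaussian vector $W^\mu_T$, hence normally distributed. Applying It\^{o}'s formula and the Gaussian moment generating function, $\E_\mu[U_\gamma(X^\pi_T)]$ reduces to $\frac{x_0^\gamma}{\gamma}\exp(\gamma T\,g(\eta))$ for $\gamma\neq0$ and to $\log x_0+T\,g(\eta)$ for $\gamma=0$, where in both cases
\[ g(\eta)=r+h(e_d^\transp\mu-r)+\eta^\transp D\mu-\tfrac12(1-\gamma)\bigl\lVert he_d^\transp\sigma+\eta^\transp\widetilde{\sigma}\bigr\rVert^2. \]
So everything reduces to maximizing the single scalar quadratic $g$ over $\eta\in\R^{d-1}$.

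The crux is a completion-of-squares bookkeeping that identifies $g$ with a Merton objective. Expanding the squared norm and collecting the terms that do not involve $\eta$ yields exactly $\widetilde{r}$ as defined in~\eqref{eq:recall_substitution_r_mu_sigma}; the coefficient of the linear term collapses to $D\mu-h(1-\gamma)D\sigma\sigma^\transp e_d=\widetilde{\mu}-\widetilde{r}\mathbf{1}_{d-1}$; and the quadratic part is $-\tfrac12(1-\gamma)\eta^\transp\widetilde{\sigma}\widetilde{\sigma}^\transp\eta$. Hence
\[ g(\eta)=\widetilde{r}+\eta^\transp(\widetilde{\mu}-\widetilde{r}\mathbf{1}_{d-1})-\tfrac12(1-\gamma)\eta^\transp\widetilde{\sigma}\widetilde{\sigma}^\transp\eta, \]
which is precisely the growth rate of an unconstrained $(d-1)$-asset market with interest rate $\widetilde{r}$, excess return $\widetilde{\mu}-\widetilde{r}\mathbf{1}_{d-1}$ and volatility $\widetilde{\sigma}$. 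This is where the otherwise opaque definitions in~\eqref{eq:recall_substitution_r_mu_sigma} get explained, and I expect the matching of constant, linear and quadratic coefficients to be the only genuine computation in the argument.

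Finally, since $\sigma$ has full rank $d$, the matrix $\widetilde{\sigma}\widetilde{\sigma}^\transp=D\sigma\sigma^\transp D^\transp$ is invertible and $g$ is strictly concave, so its maximizer is the Merton proportion $\eta^\ast=\frac{1}{1-\gamma}(\widetilde{\sigma}\widetilde{\sigma}^\transp)^{-1}(\widetilde{\mu}-\widetilde{r}\mathbf{1}_{d-1})$, with maximal value $g(\eta^\ast)=\widetilde{r}+\frac{1}{2(1-\gamma)}(\widetilde{\mu}-\widetilde{r}\mathbf{1}_{d-1})^\transp(\widetilde{\sigma}\widetilde{\sigma}^\transp)^{-1}(\widetilde{\mu}-\widetilde{r}\mathbf{1}_{d-1})$. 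Substituting this back into the two expressions for $\E_\mu[U_\gamma(X^\pi_T)]$ gives exactly the claimed formulas for $\gamma\neq0$ and $\gamma=0$. As a consistency check I would verify that $\pi^\ast=he_d+D^\transp\eta^\ast$ recovers $\frac{1}{1-\gamma}A\mu+hc$ from Proposition~\ref{prop:optimal_strategy_non-robust} via $A=D^\transp(D\sigma\sigma^\transp D^\transp)^{-1}D$ and $c=(I_d-A\sigma\sigma^\transp)e_d$, confirming that the reduction is faithful.
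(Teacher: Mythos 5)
Your proof is correct and follows essentially the same route as the paper: the constraint is eliminated via the parameterization $\pi_t = he_d + D^\transp\eta_t$, which turns the problem into an unconstrained $(d-1)$-dimensional one whose parameters are exactly $\widetilde{r}$, $\widetilde{\mu}$, $\widetilde{\sigma}$, and the value is the classical Merton value in that reduced market. The only difference is one of packaging: where the paper cites Merton's formula for the reduced market and plugs in the optimal strategy, you unpack that formula by restricting to constant strategies (legitimately justified by Proposition~\ref{prop:optimal_strategy_non-robust}) and maximizing the Gaussian-moment expression $g(\eta)$ directly, which also makes explicit where the definitions in~\eqref{eq:recall_substitution_r_mu_sigma} come from.
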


\subsection{The worst-case parameter}

In a next step one may ask what the worst possible parameter $\mu$ would be for the investor, given that she reacts optimally, i.e.\ by applying the strategy from Proposition~\ref{prop:optimal_strategy_non-robust}. This corresponds to solving the dual problem
\[ \adjustlimits \inf_{\mu\in K} \sup_{\pi\in\calA_h(x_0)} \E_\mu\bigl[U_\gamma(X^\pi_T)\bigr]. \]
Note that at this point it is not clear whether equality holds between the original problem and the corresponding dual problem. The following result for the solution of the dual problem is given in Sass and Westphal~\cite[Thm.~3.8]{sass_westphal_2020}. Let us decompose $\Gamma=\tau\tau^\transp$ for a nonsingular matrix $\tau\in\R^{d\times d}$.

\begin{theorem}\label{thm:solution_of_the_inf_sup_problem}
	Let $0=\lambda_1<\lambda_2\leq\cdots\leq\lambda_d$ denote the eigenvalues of $\tau^\transp A\tau$, and let
	\[ v_1=\frac{1}{\lVert \tau^{-1}\ones \rVert}\tau^{-1}\ones, v_2,\dots,v_d\in\R^d \]
	denote the respective orthogonal eigenvectors with $\lVert v_i\rVert=1$ for all $i=1,\dots, d$.
	Then
	\[ \adjustlimits \inf_{\mu\in K} \sup_{\pi\in\calA_h(x_0)} \E_\mu\bigl[U_\gamma(X^\pi_T)\bigr] = \E_{\mu^*}\bigl[U_\gamma(X^{\pi^*}_T)\bigr], \]
	where
	\[ \mu^*=\nu-\tau\sum_{i=1}^d \biggl( \frac{\lambda_i}{1-\gamma}+\frac{h}{\psi(\kappa)\lVert \tau^{-1}\ones \rVert} \biggr)^{-1}\biggl\langle h\tau^\transp c+\frac{\lambda_i}{1-\gamma}\tau^{-1}\nu, v_i\biggr\rangle v_i \]
	for $\psi(\kappa)\in(0,\kappa]$ that is uniquely determined by $\lVert\tau^{-1}(\mu^*-\nu)\rVert=\kappa$, and where $(\pi^*_t)_{t\in[0,T]}$ is defined by
	\[ \pi^*_t = \frac{1}{1-\gamma}A\mu^* +hc \]
	for all $t\in[0,T]$.
\end{theorem}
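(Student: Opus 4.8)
The plan is to exploit the fact that Corollary~\ref{cor:optimal_utility_non-robust} already provides the inner supremum in closed form, so that the outer infimum reduces to a finite-dimensional minimization over the ellipsoid $K$. For every $\gamma\in(-\infty,1)$ the value $\sup_{\pi\in\calA_h(x_0)}\E_\mu[U_\gamma(X^\pi_T)]$ is a strictly increasing function of the scalar
\[ f(\mu)=\widetilde r+\frac{1}{2(1-\gamma)}\bigl(\widetilde\mu-\widetilde r\,\mathbf 1_{d-1}\bigr)^\transp(\widetilde\sigma\widetilde\sigma^\transp)^{-1}\bigl(\widetilde\mu-\widetilde r\,\mathbf 1_{d-1}\bigr), \]
because for $\gamma\neq0$ one has $\tfrac{\rmd}{\rmd f}\bigl(\tfrac{x_0^\gamma}{\gamma}\rme^{\gamma Tf}\bigr)=x_0^\gamma T\,\rme^{\gamma Tf}>0$ independently of the sign of $\gamma$, and for $\gamma=0$ the value is affine increasing in $f$. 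Hence the infimum is attained at the $\mu^*$ minimizing $f$ over $K$. First I would insert the substitutions~\eqref{eq:recall_substitution_r_mu_sigma}. Writing $\widetilde\mu-\widetilde r\,\mathbf 1_{d-1}=D\mu-h(1-\gamma)D\sigma\sigma^\transp e_d$ and using $A=D^\transp(D\sigma\sigma^\transp D^\transp)^{-1}D$ together with $A\sigma\sigma^\transp e_d=e_d-c$, the cross term of the quadratic form produces $-h\,e_d^\transp\mu+h\,c^\transp\mu$, whose first part cancels the $+h\,e_d^\transp\mu$ contained in $\widetilde r$, leaving
\[ f(\mu)=\frac{1}{2(1-\gamma)}\,\mu^\transp A\mu+h\,c^\transp\mu+\mathrm{const}. \]

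So the task becomes minimizing the convex quadratic $g(\mu)=\frac{1}{2(1-\gamma)}\mu^\transp A\mu+h c^\transp\mu$ over the compact ellipsoid $K$. Since $A$ is positive semidefinite with one-dimensional kernel $\operatorname{span}\{\ones\}$ (because $D\ones=0$) and $1-\gamma>0$, the function $g$ is convex, a minimizer exists, and the KKT conditions are necessary and sufficient. The minimizer lies on $\partial K$: along the kernel direction one has $g(\mu+t\ones)=g(\mu)+th$ using $A\ones=0$ and $\langle c,\ones\rangle=1$, so $g$ is unbounded below on $\R^d$ and the constraint is active. As $\langle c,\ones\rangle\neq0$ shows $c\notin\operatorname{range}A$, the gradient never vanishes, forcing the multiplier to be strictly positive. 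The stationarity condition thus reads
\[ \frac{1}{1-\gamma}A\mu^*+h c+2\eta\,\Gamma^{-1}(\mu^*-\nu)=0,\qquad \eta>0, \]
coupled with $(\mu^*-\nu)^\transp\Gamma^{-1}(\mu^*-\nu)=\kappa^2$.

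The next step is diagonalization. Setting $y=\tau^{-1}(\mu^*-\nu)$ turns the constraint into $\lVert y\rVert=\kappa$, and multiplying the stationarity equation by $\tau^\transp$ (using $\Gamma^{-1}\tau=\tau^{-\transp}$) gives $\frac{1}{1-\gamma}\tau^\transp A\tau\,y+h\tau^\transp c+\frac{1}{1-\gamma}\tau^\transp A\nu+2\eta y=0$. Expanding $y=\sum_i\alpha_i v_i$ in the orthonormal eigenbasis of $\tau^\transp A\tau$, projecting onto $v_i$, and invoking the identity $\langle\tau^\transp A\nu,v_i\rangle=\lambda_i\langle\tau^{-1}\nu,v_i\rangle$ (from the symmetry of $\tau^\transp A\tau$) yields
\[ \alpha_i=-\Bigl(\frac{\lambda_i}{1-\gamma}+2\eta\Bigr)^{-1}\Bigl\langle h\tau^\transp c+\frac{\lambda_i}{1-\gamma}\tau^{-1}\nu,\,v_i\Bigr\rangle. \]
Matching this with the asserted expression amounts to the reparametrization $2\eta=\frac{h}{\psi(\kappa)\lVert\tau^{-1}\ones\rVert}$, so what remains is to show that this $\psi(\kappa)$ is well defined and lies in $(0,\kappa]$.

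The heart of the argument, and the step I expect to be the main obstacle, is the existence, uniqueness and range of the Lagrange multiplier. I would study $\Phi(\eta)=\sum_{i=1}^d\alpha_i(\eta)^2$ on $(0,\infty)$. Each $\lvert\alpha_i(\eta)\rvert$ is strictly decreasing in $\eta$, so $\Phi$ is continuous and strictly decreasing; the index $i=1$ is special, since $\lambda_1=0$ and, using $\tau v_1=\ones/\lVert\tau^{-1}\ones\rVert$ together with $\langle c,\ones\rangle=1$, one finds $\alpha_1(\eta)=-\frac{h}{2\eta\lVert\tau^{-1}\ones\rVert}$, forcing $\Phi(\eta)\to\infty$ as $\eta\to0^+$ and $\Phi(\eta)\to0$ as $\eta\to\infty$. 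Hence a unique $\eta>0$ solves $\Phi(\eta)=\kappa^2$, which defines $\psi(\kappa)=\frac{h}{2\eta\lVert\tau^{-1}\ones\rVert}>0$ uniquely through $\lVert\tau^{-1}(\mu^*-\nu)\rVert=\kappa$. Finally, since $\alpha_1=-\psi(\kappa)$, the constraint $\kappa^2=\sum_i\alpha_i^2\geq\alpha_1^2=\psi(\kappa)^2$ gives $\psi(\kappa)\leq\kappa$, so indeed $\psi(\kappa)\in(0,\kappa]$. Substituting $2\eta$ back and assembling $\mu^*=\nu+\tau\sum_i\alpha_i v_i$ reproduces the stated formula, while $\pi^*_t=\frac{1}{1-\gamma}A\mu^*+hc$ is the optimal response supplied by Proposition~\ref{prop:optimal_strategy_non-robust}.
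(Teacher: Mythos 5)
Your proposal is correct and follows essentially the same route as the paper: reducing the inf--sup problem via Corollary~\ref{cor:optimal_utility_non-robust} to the minimization of the convex quadratic $\mu\mapsto\frac{1}{2(1-\gamma)}\mu^\transp A\mu+hc^\transp\mu$ over the ellipsoid $K$, then diagonalizing with the eigenbasis of $\tau^\transp A\tau$ and identifying the Lagrange multiplier with $\psi(\kappa)$ through the active constraint $\lVert\tau^{-1}(\mu^*-\nu)\rVert=\kappa$. Your treatment of the multiplier (strict monotonicity of $\Phi$, the special role of the kernel direction $v_1$ yielding $\alpha_1=-\psi(\kappa)$, and hence $\psi(\kappa)\in(0,\kappa]$) matches the structure the paper relies on.
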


The preceding theorem solves the problem
\begin{equation}\label{eq:the_inf_sup_problem}
	\adjustlimits \inf_{\mu\in K} \sup_{\pi\in\calA_h(x_0)} \E_\mu\bigl[U_\gamma(X^\pi_T)\bigr].
\end{equation}
This is the corresponding dual problem to the original optimization problem
\begin{equation}\label{eq:the_sup_inf_problem}
	\adjustlimits \sup_{\pi\in\calA_h(x_0)} \inf_{\mu\in K} \E_\mu\bigl[U_\gamma(X^\pi_T)\bigr],
\end{equation}
but in general the values of these two problems do not coincide. There are, of course, special cases in which the supremum and the infimum do interchange. Those results are called \emph{minimax theorems} in the literature. In a portfolio optimization setting that is similar to ours a minimax theorem has been shown in Quenez~\cite{quenez_2004}, building up on the theory by Kramkov and Schachermayer~\cite{kramkov_schachermayer_1999}. Due to the constraint $\langle\pi_t,\ones\rangle=h$ for all $t\in[0,T]$, the result from Quenez~\cite{quenez_2004} does not apply directly to our setting. It is possible, however, to use the knowledge about the optimal strategy for~\eqref{eq:the_inf_sup_problem} to show that it indeed also solves~\eqref{eq:the_sup_inf_problem} and that in this case, the supremum and the infimum can be interchanged.

\subsection{A minimax theorem}

The following representation of $\pi^*$, given in Sass and Westphal~\cite[Lem.~3.10]{sass_westphal_2020}, is useful for proving a minimax theorem.

\begin{lemma}\label{lem:representation_of_pi_star}
	The strategy $\pi^*$ from Theorem~\ref{thm:solution_of_the_inf_sup_problem} satisfies
	\[ \pi^*_t = -\frac{h}{\psi(\kappa)\lVert \tau^{-1}\ones \rVert}\Gamma^{-1}(\mu^*-\nu) \]
	for all $t\in[0,T]$.
\end{lemma}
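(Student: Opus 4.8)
The plan is to verify the identity directly by substituting the explicit expression for $\mu^*$ from Theorem~\ref{thm:solution_of_the_inf_sup_problem} into both representations of $\pi^*_t$ and checking they agree. Writing $\beta=\frac{h}{\psi(\kappa)\lVert\tau^{-1}\ones\rVert}$ for brevity, the claim is that $\frac{1}{1-\gamma}A\mu^*+hc=-\beta\,\Gamma^{-1}(\mu^*-\nu)$. The natural coordinate system is the orthonormal eigenbasis $v_1,\dots,v_d$ of the symmetric matrix $\tau^\transp A\tau$, in which $\mu^*-\nu$ is already expanded.

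The single algebraic fact that drives everything is the eigenrelation $\tau^\transp A\tau v_i=\lambda_i v_i$, which I rewrite as $A\tau v_i=\lambda_i\tau^{-\transp}v_i$. First I would compute the right-hand side: since $\Gamma^{-1}=\tau^{-\transp}\tau^{-1}$ and $\tau^{-1}(\mu^*-\nu)=-\sum_i(\frac{\lambda_i}{1-\gamma}+\beta)^{-1}w_i\,v_i$ with $w_i=\langle h\tau^\transp c+\frac{\lambda_i}{1-\gamma}\tau^{-1}\nu,v_i\rangle$, applying $\tau^{-\transp}$ gives $-\beta\,\Gamma^{-1}(\mu^*-\nu)=\beta\sum_i(\frac{\lambda_i}{1-\gamma}+\beta)^{-1}w_i\,\tau^{-\transp}v_i$. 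For the left-hand side I would use $A\tau v_i=\lambda_i\tau^{-\transp}v_i$ to obtain $\frac{1}{1-\gamma}A\mu^*=\frac{1}{1-\gamma}A\nu-\frac{1}{1-\gamma}\sum_i\lambda_i(\frac{\lambda_i}{1-\gamma}+\beta)^{-1}w_i\,\tau^{-\transp}v_i$.

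Equating the two and rearranging, the key simplification appears: the eigenvalue-weighted sums combine through the factor $(\frac{\lambda_i}{1-\gamma}+\beta)^{-1}(\frac{\lambda_i}{1-\gamma}+\beta)=1$, collapsing the whole identity to the $\beta$- and $\lambda$-free statement $\frac{1}{1-\gamma}A\nu+hc=\sum_i w_i\,\tau^{-\transp}v_i$. To close this I would split $w_i$ into its two inner products. The $h\tau^\transp c$ part yields $h\sum_i\langle\tau^\transp c,v_i\rangle\tau^{-\transp}v_i=h\tau^{-\transp}(\tau^\transp c)=hc$ by completeness of the orthonormal basis $\{v_i\}$, while the $\frac{\lambda_i}{1-\gamma}\tau^{-1}\nu$ part yields $\frac{1}{1-\gamma}\sum_i\lambda_i\langle\tau^{-1}\nu,v_i\rangle\tau^{-\transp}v_i$, which is exactly $\frac{1}{1-\gamma}A\nu$ upon recognizing $A\nu=A\tau(\tau^{-1}\nu)=\sum_i\langle\tau^{-1}\nu,v_i\rangle\lambda_i\tau^{-\transp}v_i$ via the same eigenrelation.

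I expect no genuine analytic obstacle; the proof is pure linear algebra. The only thing requiring care is the consistent bookkeeping of the factors $\tau,\tau^{-1},\tau^\transp,\tau^{-\transp}$, together with noticing that the eigenpair $(\lambda_1,v_1)=(0,\tau^{-1}\ones/\lVert\tau^{-1}\ones\rVert)$ causes no trouble, since the coefficient $(\frac{\lambda_1}{1-\gamma}+\beta)^{-1}=\beta^{-1}$ stays finite because $\psi(\kappa)>0$. The cancellation reducing the claim to $\frac{1}{1-\gamma}A\nu+hc=\sum_i w_i\,\tau^{-\transp}v_i$ is the conceptual heart of the argument; everything else is routine verification.
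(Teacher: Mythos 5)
Your verification is correct: every step checks out. The eigenrelation $A\tau v_i=\lambda_i\tau^{-\transp}v_i$, the factorization $\Gamma^{-1}=\tau^{-\transp}\tau^{-1}$, the cancellation $(\tfrac{\lambda_i}{1-\gamma}+\beta)^{-1}(\tfrac{\lambda_i}{1-\gamma}+\beta)=1$, and the completeness identities $\sum_i\langle h\tau^\transp c,v_i\rangle\tau^{-\transp}v_i=hc$ and $\sum_i\lambda_i\langle\tau^{-1}\nu,v_i\rangle\tau^{-\transp}v_i=A\nu$ together reduce the claim to a tautology, and the zero eigenvalue is indeed harmless because $\psi(\kappa)>0$ keeps $\beta$ finite and positive. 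Note, however, that this paper states the lemma only as a recap and defers the proof to Sass and Westphal~\cite[Lem.~3.10]{sass_westphal_2020}, so there is no in-text proof to compare against; conceptually, the identity you verify by brute force is nothing but the first-order (KKT) stationarity condition of the constrained minimization that produces $\mu^*$ in Theorem~\ref{thm:solution_of_the_inf_sup_problem}: $\mu^*$ minimizes $\mu\mapsto\frac{1}{2(1-\gamma)}\mu^\transp A\mu+hc^\transp\mu$ over the ellipsoid $K$, and stationarity of the Lagrangian reads $\frac{1}{1-\gamma}A\mu^*+hc=-\beta\,\Gamma^{-1}(\mu^*-\nu)$ with $\beta=\frac{h}{\psi(\kappa)\lVert\tau^{-1}\ones\rVert}$ playing the role of the (scaled) multiplier, whose left-hand side is exactly $\pi^*$. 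Seen this way the lemma is immediate from how $\mu^*$ was derived, and it also explains \emph{why} the coefficient involves $\psi(\kappa)$; your computation instead re-derives this from the explicit eigen-expansion, which costs more bookkeeping but has the advantage of being fully self-contained, requiring only the stated formula for $\mu^*$ and no appeal to how it was obtained.
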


The preceding lemma characterizes the strategy $\pi^*$ that is optimal for the parameter $\mu^*$. Vice versa, $\mu^*$ is also the worst possible drift parameter, given that an investor applies strategy $\pi^*$. This is shown in Sass and Westphal~\cite[Prop.~3.11]{sass_westphal_2020}.

\begin{proposition}\label{prop:mu_star_is_worst_for_pi_star}
	The parameter $\mu$ that attains the minimum in
	\[ \inf_{\mu\in K} \E_\mu\bigl[U_\gamma(X^{\pi^*}_T)\bigr] \]
	is $\mu^*$, i.e.\ $\mu^*$ is the worst possible parameter, given that an investor chooses strategy $\pi^*$.
\end{proposition}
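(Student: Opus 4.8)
The plan is to exploit that the candidate strategy $\pi^*$ from Theorem~\ref{thm:solution_of_the_inf_sup_problem} is constant in time, so that under any measure $\mathbb{P}^\mu$ the wealth process $X^{\pi^*}$ is a geometric Brownian motion and the map $\mu\mapsto\E_\mu[U_\gamma(X^{\pi^*}_T)]$ is available in closed form. First I would write down the $\mathbb{P}^\mu$-dynamics of $X^{\pi^*}$, solve the linear SDE, and use that $(\pi^*)^\transp\sigma W^\mu_T$ is centred Gaussian with variance $\lVert\sigma^\transp\pi^*\rVert^2 T$ to obtain, for $\gamma\neq 0$,
\[ \E_\mu\bigl[U_\gamma(X^{\pi^*}_T)\bigr] = \frac{x_0^\gamma}{\gamma}\exp\Bigl(\gamma T\bigl(r+(\pi^*)^\transp(\mu-r\ones)-\tfrac{1}{2}(1-\gamma)\lVert\sigma^\transp\pi^*\rVert^2\bigr)\Bigr), \]
and the analogous affine expression with leading term $\log(x_0)$ for $\gamma=0$. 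The entire dependence on $\mu$ is through the linear term $\langle\pi^*,\mu\rangle$. Inspecting the sign of the prefactor together with the sign of $\gamma$ in the three cases $\gamma>0$, $\gamma<0$ and $\gamma=0$ shows that the objective is in each case monotone in $\langle\pi^*,\mu\rangle$; hence minimising $\E_\mu[U_\gamma(X^{\pi^*}_T)]$ over $\mu\in K$ is equivalent to minimising the linear functional $\mu\mapsto\langle\pi^*,\mu\rangle$ over the ellipsoid $K$.

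Second, I would solve this elementary linear program over the ellipsoid $K=\{\mu : (\mu-\nu)^\transp\Gamma^{-1}(\mu-\nu)\leq\kappa^2\}$. Writing $y=\mu-\nu$ and applying the Cauchy--Schwarz inequality in the inner product induced by $\Gamma^{-1}$ gives $\langle\pi^*,y\rangle\geq-\kappa\sqrt{(\pi^*)^\transp\Gamma\pi^*}$, with equality exactly at
\[ \hat\mu = \nu-\kappa\,\frac{\Gamma\pi^*}{\sqrt{(\pi^*)^\transp\Gamma\pi^*}}. \]
So $\hat\mu$ is the unique minimiser, and it remains to identify it with $\mu^*$.

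Third, I would substitute the representation of $\pi^*$ from Lemma~\ref{lem:representation_of_pi_star}. Writing $\beta=-h/(\psi(\kappa)\lVert\tau^{-1}\ones\rVert)$, that lemma reads $\pi^*=\beta\,\Gamma^{-1}(\mu^*-\nu)$, so that $\Gamma\pi^*=\beta(\mu^*-\nu)$. Using $\Gamma=\tau\tau^\transp$ together with $\lVert\tau^{-1}(\mu^*-\nu)\rVert=\kappa$ from Theorem~\ref{thm:solution_of_the_inf_sup_problem} yields $(\pi^*)^\transp\Gamma\pi^*=\beta^2\kappa^2$, whence
\[ \hat\mu = \nu-\kappa\,\frac{\beta(\mu^*-\nu)}{\lvert\beta\rvert\,\kappa} = \nu-\sgn(\beta)(\mu^*-\nu) = \mu^*, \]
the last equality because $\beta<0$.

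I expect the only genuinely delicate point to be the sign: the linear functional over an ellipsoid has two antipodal candidate extremisers $\nu\pm\kappa\Gamma\pi^*/\sqrt{(\pi^*)^\transp\Gamma\pi^*}$, and the argument must single out $\mu^*$ rather than its reflection $2\nu-\mu^*$. This is precisely what the sign $\beta<0$ in Lemma~\ref{lem:representation_of_pi_star} delivers, encoding geometrically that $\pi^*$ points in the direction $\Gamma^{-1}(\nu-\mu^*)$, i.e.\ away from the worst-case drift, so that $\mu^*$ indeed minimises the linear objective. The closed-form expected-utility computation and the ellipsoidal optimisation are otherwise routine.
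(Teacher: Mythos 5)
Your proof is correct and follows essentially the same route as the paper's: compute $\E_\mu\bigl[U_\gamma(X^{\pi^*}_T)\bigr]$ in closed form using that $\pi^*$ is constant, observe that in all cases $\gamma>0$, $\gamma<0$, $\gamma=0$ minimization reduces to minimizing the linear functional $\mu\mapsto\langle\pi^*,\mu\rangle$ over the ellipsoid $K$, and then use Lemma~\ref{lem:representation_of_pi_star} together with $\lVert\tau^{-1}(\mu^*-\nu)\rVert=\kappa$ to identify $\mu^*$ as the (unique, since $\pi^*\neq 0$) minimizer. The only cosmetic difference is direction: the paper rewrites $\mu^*$ as $\nu-\kappa\,\Gamma\pi^*/\sqrt{(\pi^*)^\transp\Gamma\pi^*}$ and recognizes it as the ellipsoid minimizer, whereas you first derive the minimizer and then check it equals $\mu^*$ via the sign $\beta<0$ — the same computation read in reverse.
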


It then follows that the point $(\pi^*,\mu^*)$ is a \emph{saddle point} of the problem, i.e.\ it holds
\[ \E_{\mu^*}\bigl[U_\gamma(X^{\pi}_T)\bigr] \leq \E_{\mu^*}\bigl[U_\gamma(X^{\pi^*}_T)\bigr] \leq \E_{\mu}\bigl[U_\gamma(X^{\pi^*}_T)\bigr] \]
for all $\mu\in K$ and $\pi\in\calA_h(x_0)$. This property is essential for proving the following minimax theorem, given in Sass and Westphal~\cite[Thm.~3.12]{sass_westphal_2020}.

\begin{theorem}\label{thm:duality_result}
	Let $K=\{ \mu\in\R^d \,|\, (\mu-\nu)^\transp \Gamma^{-1}(\mu-\nu) \leq \kappa^2 \}$. Then
	\[ \adjustlimits \sup_{\pi\in\calA_h(x_0)} \inf_{\mu\in K} \E_\mu\bigl[U_\gamma(X^\pi_T)\bigr] = \E_{\mu^*}\bigl[U_\gamma(X^{\pi^*}_T)\bigr] = \adjustlimits \inf_{\mu\in K} \sup_{\pi\in\calA_h(x_0)} \E_\mu\bigl[U_\gamma(X^\pi_T)\bigr], \]
	where $\mu^*$ and $\pi^*$ are defined as in Theorem~\ref{thm:solution_of_the_inf_sup_problem}.
\end{theorem}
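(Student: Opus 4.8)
The plan is to deduce the minimax identity from the saddle-point structure already assembled in the preceding results, combined with the elementary weak-duality inequality. The only genuinely substantive ingredients — the explicit solution of the dual problem \eqref{eq:the_inf_sup_problem} and the fact that $\mu^*$ is worst-case for $\pi^*$ — are supplied by Theorem~\ref{thm:solution_of_the_inf_sup_problem} and Proposition~\ref{prop:mu_star_is_worst_for_pi_star}, so what remains is to package them correctly.

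First I would record the two halves of the saddle-point inequality. The right-hand inequality
\[ \E_{\mu^*}\bigl[U_\gamma(X^{\pi^*}_T)\bigr] \leq \E_\mu\bigl[U_\gamma(X^{\pi^*}_T)\bigr] \quad \text{for all } \mu\in K \]
is exactly Proposition~\ref{prop:mu_star_is_worst_for_pi_star}. For the left-hand inequality
\[ \E_{\mu^*}\bigl[U_\gamma(X^{\pi}_T)\bigr] \leq \E_{\mu^*}\bigl[U_\gamma(X^{\pi^*}_T)\bigr] \quad \text{for all } \pi\in\calA_h(x_0), \]
I would observe that $\pi^*_t = \frac{1}{1-\gamma}A\mu^* + hc$ is, by Proposition~\ref{prop:optimal_strategy_non-robust} applied with the fixed parameter $\mu=\mu^*$, precisely the maximizer of $\pi\mapsto\E_{\mu^*}[U_\gamma(X^\pi_T)]$ over $\calA_h(x_0)$; hence $\pi^*$ attains the non-robust supremum for the measure $\mathbb{P}^{\mu^*}$.

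With these in hand the identity follows by a sandwich. On the one hand, the max-min inequality that always holds, together with Theorem~\ref{thm:solution_of_the_inf_sup_problem}, gives
\[ \adjustlimits \sup_{\pi\in\calA_h(x_0)} \inf_{\mu\in K} \E_\mu\bigl[U_\gamma(X^\pi_T)\bigr] \leq \adjustlimits \inf_{\mu\in K} \sup_{\pi\in\calA_h(x_0)} \E_\mu\bigl[U_\gamma(X^\pi_T)\bigr] = \E_{\mu^*}\bigl[U_\gamma(X^{\pi^*}_T)\bigr]. \]
On the other hand, restricting the outer supremum to the single admissible strategy $\pi^*$ and then invoking Proposition~\ref{prop:mu_star_is_worst_for_pi_star} yields the reverse bound
\[ \adjustlimits \sup_{\pi\in\calA_h(x_0)} \inf_{\mu\in K} \E_\mu\bigl[U_\gamma(X^\pi_T)\bigr] \geq \inf_{\mu\in K} \E_\mu\bigl[U_\gamma(X^{\pi^*}_T)\bigr] = \E_{\mu^*}\bigl[U_\gamma(X^{\pi^*}_T)\bigr]. \]
Chaining the two displays forces every inequality to be an equality, which is the assertion.

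The step I expect to carry the real weight has in fact already been cleared upstream: it is Proposition~\ref{prop:mu_star_is_worst_for_pi_star}, the verification that no drift in $K$ is worse for the fixed strategy $\pi^*$ than $\mu^*$ itself, since this is what turns the generic weak-duality bound into a two-sided estimate. Everything after that is standard saddle-point bookkeeping, and the only points requiring a moment's care are confirming that $\pi^*$ is genuinely admissible — so that it may legitimately be used as a competitor in the outer supremum — and that the parameter $\mu^*$ produced by Theorem~\ref{thm:solution_of_the_inf_sup_problem} indeed lies in $K$, which is guaranteed by the normalization $\lVert\tau^{-1}(\mu^*-\nu)\rVert=\kappa$.
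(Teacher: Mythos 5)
Your proposal is correct and follows essentially the same route as the paper: it assembles the saddle-point property of $(\pi^*,\mu^*)$ from Proposition~\ref{prop:optimal_strategy_non-robust} (with $\mu=\mu^*$) and Proposition~\ref{prop:mu_star_is_worst_for_pi_star}, and then closes the weak-duality gap by the standard sandwich argument, which is exactly how the paper derives the minimax identity from the saddle point established before Theorem~\ref{thm:duality_result}.
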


\section{Generalized Financial Market Model}\label{sec:generalized_financial_market_model}

In the following we generalize the approach from Sass and Westphal~\cite{sass_westphal_2020} to a financial market model where the drift is a stochastic process instead of a constant. To account for a change in information about the drift we also introduce time-dependence in the uncertainty set.
The basic idea is that the available information in the market, for instance the observed asset returns or external sources of information, are used to estimate the true drift based on filtering techniques and to set up a corresponding uncertainty set $K_t$ at any time $t\in[0,T]$. Given $K_t$, investors then take model uncertainty into account by assuming that in the future the worst possible drift process $(\mu^{(t)}_s)_{s\in[t,T]}$ with values in $K_t$ will be realized. In our continuous-time setting the decision about the uncertainty set will be revised as soon as the information about the true drift changes, so in the extreme case continuously in time.

\subsection{Reference model}

Before stating our generalized financial market, we make an observation that justifies the setup of the model.

\begin{remark}
	Suppose that the ``true'' dynamics of the $d$-dimensional return process $R$ are given by
	\[ \rmd R_t=\mu_t\,\rmd t + \sigma\,\rmd W_t, \quad R_0=0, \]
	for some stochastic drift process $(\mu_t)_{t\in[0,T]}$, an $m$-dimensional Brownian motion $(W_t)_{t\in[0,T]}$, $m\geq d$, and some $\sigma\in\R^{d\times m}$ with full rank. Assume further that the information of an investor is given by the investor filtration $\mathbb{G}=(\calG_t)_{t\in[0,T]}$. The investor's best estimator for $\mu$ is then the conditional mean $m_t:=\E[\mu_t\,|\,\calG_t]$ and one can rewrite the dynamics of the return process as
	\[ \rmd R_t=m_t\,\rmd t + \sigma\,\rmd V_t, \]
	where the so-called innovations process $(V_t)_{t\in[0,T]}$ is a $\mathbb{G}$-adapted Brownian motion.
	For instance, in the setting where an investor observes only the return process $R$, the process $(m_t)_{t\in[0,T]}$ would be the Kalman filter.

	In the following, we set up our continuous-time financial market model working directly with the innovations process and therefore assuming a $\mathbb{G}$-adapted drift process.
	The separation principle that we use here by filtering first and then performing the optimization is a common approach for dealing with partial information.
\end{remark}

We fix an investment horizon $T>0$ and some filtered probability space $(\Omega, \calF, \mathbb{F}, \mathbb{P})$ where the filtration $\mathbb{F}=(\calF_t)_{t\in[0,T]}$ satisfies the usual conditions. All processes are assumed to be $\mathbb{F}$-adapted. We assume that an investor's information is described by the investor filtration $\mathbb{G}=(\calG_t)_{t\in[0,T]}$ with $\calG_t\subseteq\calF_t$ for all $t\in[0,T]$.
We consider, as before, a financial market with one risk-free and $d\geq 2$ risky assets. The risk-free asset $S^0$ evolves as
\[ \rmd S^0_t = S^0_t r\,\rmd t, \quad S^0_0 = 1, \]
where $r>0$ is the deterministic risk-free interest rate. The prices of the risky assets  evolve according to $\rmd S_t^i = S_t^i \,\rmd R_t^i$ for $S_0^i>0$, $i=1, \ldots, d$, where the returns $R^1,\dots,R^d$ of the risky assets follow the dynamics
\begin{equation}\label{eq:dynamics_of_S_time_dependent_drift}
	\rmd R_t = \nu_t\,\rmd t + \sigma\,\rmd W_t, \quad R_0=0,
\end{equation}
where $R=(R^1,\dots,R^d)^\transp$. Here, $(W_t)_{t\in[0,T]}$ is an $m$-dimensional Brownian motion under $\mathbb{P}$, $m\geq d$. Note that the volatility matrix $\sigma\in\R^{d\times m}$ in~\eqref{eq:dynamics_of_S_time_dependent_drift} is constant. Further, we assume that $\sigma$ has full rank equal to $d$. In contrast to the volatility, the drift might change in the course of time. We assume that $(\nu_t)_{t\in[0,T]}$ is an $\R^d$-valued $\mathbb{G}$-adapted stochastic process and think of $(\nu_t)_{t\in[0,T]}$ as an estimation for the true drift process given all available information. We speak of $(\nu_t)_{t\in[0,T]}$ as the \emph{reference drift}.

\begin{remark}
	We could generalize this model by replacing the constant volatility $\sigma$ by a stochastic process $(\sigma_t)_{t\in[0,T]}$ that is \emph{observable} by the investor, i.e.\ $\mathbb{G}$-adapted. Our techniques and the results in the following section carry over to this setting. It would just be necessary to put assumptions on the volatility process to ensure that the change of measure in Section~\ref{subs:uncertainty_sets_and_change_of_measure} below is well-defined.
	
	In a model where the stochastic volatility $(\sigma_t)_{t\in[0,T]}$ is \emph{not} observable by the investor, our techniques do not carry over in general. The reason is that we need models where we can write down the (non-robust) optimal strategy of an investor explicitly. Portfolio optimization under stochastic volatility is much more complicated since one has to deal with truly incomplete markets. Pham and Quenez~\cite{pham_quenez_2001} state assumptions on the model needed to obtain solutions. Similar restrictions might work for our case. These also have to take into account observability issues which essentially require that the volatility process can be observed one-to-one from the quadratic variation of the stock returns.
	
	However, for portfolio optimization the drift is the crucial factor and, at the same time, the one that is difficult to estimate from historical asset price data. Drift processes tend to fluctuate randomly over time and even if they were constant, long time series would be needed to estimate this parameter with a satisfactory degree of precision. On the contrary, volatility can be estimated reasonably well from return observations. For these reasons, our model with constant volatility serves as a suitable first approximation for more general models that allows an exact study of the influence of model uncertainty on the optimal strategy.
\end{remark}

\subsection{Uncertainty sets and change of measure}\label{subs:uncertainty_sets_and_change_of_measure}

As before, we are concerned with investors who are uncertain about the true drift. They are aware that $(\nu_t)_{t\in[0,T]}$ in~\eqref{eq:dynamics_of_S_time_dependent_drift} might not be the true drift process. In utility maximization problems they want to maximize their worst-case expected utility, given that the true drift process is in a way ``close'' to $\nu$.
To model the uncertainty about the drift we specify the ellipsoidal sets
\[ K_t = \bigl\{ \mu\in\R^d \,\big|\, (\mu-\nu_t)^\transp\Gamma_t^{-1}(\mu-\nu_t)\leq \kappa_t^2 \bigr\}, \quad t\in[0,T], \]
where $(\Gamma_t)_{t\in[0,T]}$ is a $\mathbb{G}$-adapted stochastic process of symmetric and positive-definite matrices $\Gamma_t\in\R^{d\times d}$ and $(\kappa_t)_{t\in[0,T]}$ is $\mathbb{G}$-adapted with $\kappa_t>0$ for each $t\in[0,T]$.
The set $K_t$ is determined at time $t\in[0,T]$ by taking the available information about the true drift process into account, for example based on filtering techniques. The process $(K_t)_{t\in[0,T]}$ is a $\mathbb{G}$-adapted set-valued process, therefore the investor knows the realization of $K_t$ at time $t\in[0,T]$.

Given this $K_t$, investors then take model uncertainty into account by assuming that in the future the worst possible drift process having values in $K_t$ will be realized. We denote this worst-case future drift by $(\mu^{(t),*}_s)_{s\in[t,T]}$. This allows for some deterministic dynamics given $K_t$, i.e.\ the $\mu^{(t),*}_s$ for any $s\in[t,T]$ are $\calG_t$-measurable. The worst-case optimization problem then leads to an optimal strategy $(\pi^{(t),*}_s)_{s\in[t,T]}$, determined at time $t$. In our continuous-time setting this decision will be revised as soon as $K_t$ changes, possibly continuously in time. The realized worst-case drift process $(\mu^*_t)_{t\in[0,T]}$ and optimal strategy $(\pi^*_t)_{t\in[0,T]}$ are then given by
\[ \mu^*_t=\mu^{(t),*}_t, \quad \pi^*_t=\pi^{(t),*}_t \]
for any $t\in[0,T]$. If $\mu^*$ and $\pi^*$ are uniquely determined, then they are by construction $\mathbb{G}$-adapted.
This is not so much a game setting but rather a way how the investor determines the worst case. It is a mixture of using estimation methods and taking model uncertainty into account.

The optimization problem can be derived only locally for each $t\in[0,T]$. In detail, the setup looks as follows. At time $t\in[0,T]$ investors assume that the future drift process will be the worst one within the class
\[ \calK^{(t)}=\bigl\{\mu^{(t)}=(\mu^{(t)}_s)_{s\in[t,T]} \,\big|\, \mu^{(t)}_s\in K_t \text{ and } \mu^{(t)}_s \text{ is }\calG_t\text{-measurable for each }s\in[t,T]\bigr\}. \]
For each $\mu=\mu^{(t)}\in\calK^{(t)}$ we can construct a new measure by defining the $\R^m$-valued process $(\theta_s(\mu))_{s\in[0,T]}$ with
\[ \theta_s(\mu)=
\begin{cases}
	0, &s<t,\\
	\sigma^\transp(\sigma\sigma^\transp)^{-1}(\mu_s-\nu_s), &s\geq t,
\end{cases} \]
and
\[ Z^\mu_s = \exp\biggl(\int_0^s\theta_u(\mu)^\transp\,\rmd W_u -\frac{1}{2}\int_0^s\lVert\theta_u(\mu)\rVert^2\,\rmd u\biggr) \]
for $s\in[0,T]$. We then define the new probability measure $\mathbb{P}^\mu$ by
\[ \frac{\rmd \mathbb{P}^{\mu}}{\rmd \mathbb{P}} = Z^\mu_T \]
and note that, under $\mathbb{P}^\mu$, the process $(W^\mu_s)_{s\in[0,T]}$ with
\[ W^\mu_s = W_s-\int_0^s \theta_u(\mu)\,\rmd u \]
for $s\in[0,T]$ is a Brownian motion by Girsanov's Theorem. Note that due to boundedness of $K_t$ the process $\theta(\mu)$ is bounded and therefore $(Z^\mu_s)_{s\in[0,T]}$ is a true martingale.
The change of measure causes a change in the drift on the interval $[t,T]$ only. For our optimization problems this is the only relevant time interval since we condition on $\calG_t$.
For $s\in[t,T]$ we can rewrite the dynamics of the asset returns as
\[ \rmd R_s = \nu_s\,\rmd s + \sigma\,\rmd W_s = \mu_s\,\rmd s + \sigma\,\rmd W^\mu_s, \]
which means that under $\mathbb{P}^\mu$ the future drift of the stocks is given by $(\mu_s)_{s\in[t,T]}$. We write $\E_\mu[\cdot]=\E_{\mu^{(t)}}[\cdot]$ for expectation under the measure $\mathbb{P}^\mu$.

\subsection{Local optimization problem}

An investor's behavior in the time interval $[t,T]$ is described by a self-financing trading strategy $\pi^{(t)}=(\pi^{(t)}_s)_{s\in[t,T]}$. The class of admissible trading strategies, given that the investor has wealth $x>0$ at time $t$, is
\begin{equation*}
	\begin{aligned}
		\calA(t,x) = \biggl\{\pi^{(t)}=(\pi^{(t)}_s)_{s\in[t,T]} \;\bigg|&\; \pi^{(t)} \text{ is } \mathbb{G}\text{-adapted}, \; X^\pi_t=x,\\
		&\E_{\mu^{(t)}}\biggl[\int_t^T\! \lVert\sigma^\transp\pi^{(t)}_s\rVert^2\,\rmd s\biggr]<\infty \text{ for all } \mu^{(t)}\in \calK^{(t)}\biggr\}.
	\end{aligned}
\end{equation*}
We will restrict these strategies by imposing, as in Sass and Westphal~\cite{sass_westphal_2020}, a constraint that prevents a pure bond investment. For any $h>0$ we define the set
\[ \calA_h(t,x)=\bigl\{ \pi^{(t)}\in\calA(t,x) \,\big|\, \langle\pi^{(t)}_s,\ones\rangle=h \text{ for all }s\in[t,T]\bigr\}. \]
For an investor choosing strategy $\pi=\pi^{(t)}\in\calA(t,X^\pi_t)$ the terminal wealth can be written as
\[ X^\pi_T = X^\pi_t\exp\biggl(\int_t^T \Bigl(r+\pi_s^\transp(\mu_s-r\ones)-\frac{1}{2}\lVert\sigma^\transp\pi_s\rVert^2\Bigr)\rmd s + \int_t^T \pi_s^\transp\sigma\,\rmd W^\mu_s\biggr). \]

We are now able to state our utility maximization problem. In the following we write $X^{t,x,\pi^{(t)}}_s$ for the wealth at $s\in[t,T]$ when starting at $t$ with $x$ and using strategy $\pi^{(t)}$. At time $t$ the local optimization problem for $x>0$ then reads
\begin{equation}\label{eq:value_function_robust_power_constrained_time-dependent}
	\adjustlimits \sup_{\pi^{(t)}\in\calA_h(t,x)} \inf_{\mu^{(t)}\in \calK^{(t)}} \E_{\mu^{(t)}}\Bigl[U_\gamma\bigl(X^{t,x,\pi^{(t)}}_T\bigr)\Bigr].
\end{equation}
Here, $U_\gamma$ with $\gamma\in(-\infty,1)$ again denotes the power utility function $U_\gamma(x)=\frac{x^\gamma}{\gamma}$ if $\gamma\neq 0$, and logarithmic utility $U_0(x)=\log(x)$ if $\gamma=0$.

\begin{remark}
	In the case where $K_t= \{ \mu\in\R^d \,|\, (\mu-\nu)^\transp\Gamma^{-1}(\mu-\nu)\leq \kappa^2 \}$ for all $t\in[0,T]$, i.e.\ where our reference drift is simply a constant $\nu$, and also the matrix $\Gamma_t=\Gamma$ as well as the radius $\kappa_t=\kappa$ are constant in time, we obtain the setting from Section~\ref{sec:recap_of_results_for_constant_drift} as a special case.
\end{remark}

\section{Solution to the Robust Utility Maximization Problem}\label{sec:solution_to_the_robust_utility_maximization_problem}

In this section we solve~\eqref{eq:value_function_robust_power_constrained_time-dependent} by computing the optimal strategy $\pi^{(t),*}$ and the worst-case drift $\mu^{(t),*}$ and prove a minimax theorem in analogy to Theorem~\ref{thm:duality_result}. We proceed as in the setting with constant drift in Section~\ref{sec:recap_of_results_for_constant_drift}. Looking at the local optimization problem for a fixed $t\in[0,T]$ enables us to reduce the drift uncertainty to $K_t$ and make use of our results for constant drift. At the end of this section we explain which strategy will be realized by an investor whose information about the drift process changes continuously in time and how this strategy is naturally obtained from the solution to the local optimization problems.

\subsection{Solution to the non-robust problem}

As a first step towards solving~\eqref{eq:value_function_robust_power_constrained_time-dependent} we compute the optimal strategy for an investor given a particular future drift $\mu^{(t)}\in\calK^{(t)}$.
Due to the constraint on the admissible strategies, which prevents a pure bond investment, we reduce this problem to a less-dimensional unconstrained problem for which standard results from Merton~\cite{merton_1969} apply.

This is only the first step in solving the robust problem, but knowing the optimal strategy given a fixed drift will later enable us to compute the worst-case drift process and prove a minimax theorem.

\begin{proposition}\label{prop:optimal_strategy_non-robust_time-dependent}
	Let $t\in[0,T]$, $x>0$ and $\mu^{(t)}\in\calK^{(t)}$. Then the optimal strategy for the optimization problem
	\[ \sup_{\pi^{(t)}\in\calA_h(t,x)} \E_{\mu^{(t)}}\Bigl[U_\gamma\bigl(X^{t,x,\pi^{(t)}}_T\bigr)\Bigr] \]
	is the strategy $(\pi^{(t)}_s)_{s\in[t,T]}$ with
	\[ \pi^{(t)}_s  = \frac{1}{1-\gamma}A\mu^{(t)}_s+hc \]
	for all $s\in[t,T]$, where $A\in\R^{d\times d}$ and $c\in\R^d$ are as introduced in Definition~\ref{def:matrix_A_vector_c}.
\end{proposition}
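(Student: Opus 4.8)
The plan is to mirror the constant-drift argument of Proposition~\ref{prop:optimal_strategy_non-robust}, exploiting that for a fixed $t$ the future drift $\mu^{(t)}=(\mu^{(t)}_s)_{s\in[t,T]}$ is $\calG_t$-measurable and hence deterministic once we condition on $\calG_t$. First I would condition on $\calG_t$: under $\mathbb{P}^\mu$ the return dynamics on $[t,T]$ read $\rmd R_s=\mu^{(t)}_s\,\rmd s+\sigma\,\rmd W^\mu_s$ with constant volatility $\sigma$ and a drift $\mu^{(t)}_s$ that is now a known (deterministic) function of $s$. Thus, conditionally on $\calG_t$, the local problem is a utility maximization in a market with constant volatility and deterministic, time-varying drift, the only new feature relative to Section~\ref{sec:recap_of_results_for_constant_drift} being the time-dependence of the drift.

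Next I would remove the constraint $\langle\pi^{(t)}_s,\ones\rangle=h$ exactly as in the constant-drift proof. Writing $\pi^{(t)}_s=he_d+D^\transp\eta_s$ for an unconstrained $(d-1)$-dimensional process $(\eta_s)_{s\in[t,T]}$ enforces the constraint automatically, since $D\ones=0$ gives $\langle D^\transp\eta_s,\ones\rangle=0$ while $\langle he_d,\ones\rangle=h$. This turns the $d$-dimensional constrained problem into a $(d-1)$-dimensional unconstrained one in the reduced market with volatility $\widetilde\sigma=D\sigma$ and the time-dependent coefficients $\widetilde r_s,\widetilde\mu_s$ obtained from~\eqref{eq:recall_substitution_r_mu_sigma} with $\mu$ replaced by $\mu^{(t)}_s$.

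For this reduced problem I would invoke the myopic structure of power and logarithmic utility. Making the ansatz $V(s,x)=\frac{x^\gamma}{\gamma}f(s)$ for $\gamma\neq0$ (respectively $V(s,x)=\log x+f(s)$ for $\gamma=0$), the HJB equation separates: the spatial part factors out and the optimal feedback at time $s$ is the pointwise maximizer of the Hamiltonian, namely the instantaneous Merton portfolio $\eta^*_s=\frac{1}{1-\gamma}(\widetilde\sigma\widetilde\sigma^\transp)^{-1}(\widetilde\mu_s-\widetilde r_s\mathbf{1}_{d-1})$, with $f$ then determined by an ODE with time-varying but deterministic coefficients. Equivalently one verifies directly that the associated value process is a martingale under $\eta^*$ and a supermartingale under any competitor. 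Since the drift enters only through $\mu^{(t)}_s$ and the volatility is constant, the Hamiltonian and its maximizer are, at each fixed $s$, exactly those of the constant-drift problem with $\mu\mapsto\mu^{(t)}_s$; transforming back via $\pi^{(t)}_s=he_d+D^\transp\eta^*_s$ and simplifying, with the algebra identical to Sass and Westphal~\cite[Prop.~3.4]{sass_westphal_2020}, yields $\pi^{(t)}_s=\frac{1}{1-\gamma}A\mu^{(t)}_s+hc$. Finally I would check admissibility: this candidate is $\mathbb{G}$-adapted, as it depends only on the $\calG_t\subseteq\calG_s$-measurable quantity $\mu^{(t)}_s$, and satisfies the integrability requirement because $\mu^{(t)}_s\in K_t$ is bounded, so $\pi^{(t),*}\in\calA_h(t,x)$.

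The main obstacle is the genuinely new point: upgrading the Merton solution from constant to deterministic time-varying coefficients and confirming that myopia persists. In the constant-drift case this is immediate, whereas here one must check that the separation ansatz really solves the HJB equation, i.e.\ that the time-dependence of $\widetilde\mu_s,\widetilde r_s$ only enters the time-factor $f$ and leaves the optimal feedback law unchanged. Conditioning on $\calG_t$ is what makes this clean, since it removes any additional stochasticity from the coefficients; the remaining verification (martingale/supermartingale property and integrability) is routine and relies only on the boundedness of $K_t$.
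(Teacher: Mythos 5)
Your proposal is correct and follows essentially the same route as the paper's proof: the same reduction of the constrained $d$-dimensional problem to an unconstrained $(d-1)$-dimensional market via $\pi^{(t)}_s = he_d + D^\transp\eta_s$, the same appeal to a Merton-type myopic rule with $\calG_t$-measurable (hence, conditionally, deterministic) time-varying coefficients, the same back-transformation yielding $\frac{1}{1-\gamma}A\mu^{(t)}_s + hc$, and the same admissibility check via boundedness of $K_t$. The only difference is in how the key sub-step is justified: where you sketch an HJB separation ansatz $V(s,x)=\frac{x^\gamma}{\gamma}f(s)$ with verification to establish that myopia survives time-dependent coefficients, the paper delegates precisely this point --- which it explicitly flags as the part that ``needs to be shown'' for power utility --- to Westphal~\cite{westphal_2019} and Karatzas et al.~\cite{karatzas_lehoczky_shreve_xu_1991}.
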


\begin{proof}
	The proof works along the lines of the proof of Proposition~\ref{prop:optimal_strategy_non-robust}. We take an arbitrary strategy $\pi=\pi^{(t)}\in\calA_h(t,x)$ and recall that we can write the terminal wealth given $X_t^\pi =x$ under strategy $\pi$ as
	\[ X^{t,x,\pi}_T = x\,\exp\biggl(\int_t^T \Bigl(r+\pi_s^\transp(\mu^{(t)}_s-r\ones)-\frac{1}{2}\lVert\sigma^\transp\pi_s\rVert^2\Bigr)\rmd s + \int_t^T \pi_s^\transp\sigma\,\rmd W^\mu_s\biggr). \]
	We now proceed exactly as in the proof of Proposition~\ref{prop:optimal_strategy_non-robust}, replacing the constant $\mu$ by the $\calG_t$-measurable $(\mu^{(t)}_s)_{s\in[t,T]}$, and perform the same transformation to a $(d-1)$-dimensional unconstrained financial market.
	
	We can deduce that $\E_{\mu^{(t)}}[U_\gamma(X^{t,x,\pi}_T)]$ equals the expected utility of terminal wealth, conditional on $X_t^\pi=x$, in an unconstrained financial market with $d-1$ risky assets, where the future drift process is $(\widetilde{\mu}_s)_{s\in[t,T]}$, the risk-free interest rate is $(\widetilde{r}_s)_{s\in[t,T]}$ and the volatility matrix is $\widetilde{\sigma}\in\R^{(d-1)\times m}$. These transformed market parameters have the form
	\begin{align*}
		\widetilde{\sigma}&=D\sigma, \\
		\widetilde{r}_s&=(1-h)r+he_d^\transp\mu^{(t)}_s-\frac{1}{2}(1-\gamma)\lVert h\sigma^\transp e_d \rVert^2, \\
		\widetilde{\mu}_s&=D\mu^{(t)}_s - h(1-\gamma)D\sigma\sigma^\transp e_d+\widetilde{r}_s\mathbf{1}_{d-1}.
	\end{align*}
	Note that since the $(\mu^{(t)}_s)_{s\in[t,T]}$ are $\calG_t$-measurable, so are $(\widetilde{r}_s)_{s\in[t,T]}$ and $(\widetilde{\mu}_s)_{s\in[t,T]}$, in particular the market parameters in the transformed market can be observed by the investor.
	In this $(d-1)$-dimensional unconstrained financial market we can deduce from Merton~\cite{merton_1969} that the optimal strategy is of the form
	\begin{equation}\label{eq:optimal_pi_tilde_time-dependent}
		\widetilde{\pi}_s = \frac{1}{1-\gamma}(\widetilde{\sigma}\widetilde{\sigma}^\transp)^{-1}(\widetilde{\mu}_s-\widetilde{r}_s\mathbf{1}_{d-1})
		= \frac{1}{1-\gamma}(D\sigma\sigma^\transp D^\transp)^{-1}\bigl(D\mu^{(t)}_s - h(1-\gamma)D\sigma\sigma^\transp e_d\bigr)
	\end{equation}
	for every $s\in[t,T]$.
	For the logarithmic utility case, this is immediate, for power utility, this needs to be shown.
	
	Merton~\cite{merton_1969} yields the form of the optimal strategy in a Black--Scholes market with constant parameters. This result can be extended to a market where the risk-free interest rate as well as drift and volatility of the stocks are not necessarily constant but still observable by the investor, see Westphal~\cite[App.~B]{westphal_2019} for a complete proof. A similar result has been proven in Karatzas et al.~\cite{karatzas_lehoczky_shreve_xu_1991} for complete markets with deterministic market coefficients and for incomplete markets with totally unhedgeable market coefficients.
	
	Now we can return to our original market and obtain that the optimal strategy fulfills
	\begin{equation*}
		\begin{aligned}
			\pi^{(t)}_s &= D^\transp\widetilde{\pi}_s+he_d\\
			&= D^\transp\frac{1}{1-\gamma}(D\sigma\sigma^\transp D^\transp)^{-1}\bigl(D\mu^{(t)}_s - h(1-\gamma)D\sigma\sigma^\transp e_d\bigr)+he_d \\
			&= \frac{1}{1-\gamma}D^\transp(D\sigma\sigma^\transp D^\transp)^{-1}D\mu^{(t)}_s +h\bigl(I_d-D^\transp(D\sigma\sigma^\transp D^\transp)^{-1}D\sigma\sigma^\transp\bigr)e_d \\
			&= \frac{1}{1-\gamma}A\mu^{(t)}_s+hc
		\end{aligned}
	\end{equation*}
	for all $s\in[t,T]$, where we have used the notation for $A$ and $c$ from Definition~\ref{def:matrix_A_vector_c}. Note that $(\pi^{(t)}_s)_{s\in[t,T]}$ is indeed admissible due to boundedness of $K_t$.
\end{proof}

The preceding proposition states the form of the investor's optimal strategy under the assumption that a specific future drift process $(\mu^{(t)}_s)_{s\in[t,T]}$ is given. The explicit form can be used to compute also the expected utility obtained when applying the optimal strategy.

\begin{corollary}
	Let $t\in[0,T]$ and $\mu^{(t)}\in\calK^{(t)}$. Then the optimal expected utility from terminal wealth is
	\begin{equation*}
		\begin{aligned}
			&\sup_{\pi^{(t)}\in\calA_h(t,x)} \E_{\mu^{(t)}}\Bigl[U_\gamma\bigl(X^{t,x,\pi^{(t)}}_T\bigr)\Bigr]\\
			&\quad=
			\begin{dcases}
				\frac{x^\gamma}{\gamma}\exp\biggl(\gamma\! \int_t^T\!\Bigl( \widetilde{r}_s+\frac{1}{2(1-\gamma)}\bigl(\widetilde{\mu}_s-\widetilde{r}_s\mathbf{1}_{d-1}\bigr)^\transp(\widetilde{\sigma}\widetilde{\sigma}^\transp )^{-1}\bigl(\widetilde{\mu}_s-\widetilde{r}_s\mathbf{1}_{d-1}\bigr)\Bigr)\rmd s\!\biggr), &\gamma\neq 0,\\
				\log(x) + \int_t^T\Bigl( \widetilde{r}_s+\frac{1}{2}\bigl(\widetilde{\mu}_s-\widetilde{r}_s\mathbf{1}_{d-1}\bigr)^\transp(\widetilde{\sigma}\widetilde{\sigma}^\transp )^{-1}\bigl(\widetilde{\mu}_s-\widetilde{r}_s\mathbf{1}_{d-1}\bigr) \Bigr)\rmd s, &\gamma=0,
			\end{dcases}
		\end{aligned}
	\end{equation*}
	where
	\begin{equation*}
		\begin{aligned}
			\widetilde{\sigma}&=D\sigma, \\
			\widetilde{r}_s&=(1-h)r+he_d^\transp\mu^{(t)}_s-\frac{1}{2}(1-\gamma)\lVert h\sigma^\transp e_d \rVert^2, \\
			\widetilde{\mu}_s&=D\mu^{(t)}_s - h(1-\gamma)D\sigma\sigma^\transp e_d+\widetilde{r}_s\mathbf{1}_{d-1}.
		\end{aligned}
	\end{equation*}
\end{corollary}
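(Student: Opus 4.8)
The plan is to evaluate the value function of the reduced Merton problem along its optimizer, exactly as in Corollary~\ref{cor:optimal_utility_non-robust}, but now with time-varying coefficients. From the proof of Proposition~\ref{prop:optimal_strategy_non-robust_time-dependent} we already know that the supremum on the left-hand side equals the optimal expected utility in a $(d-1)$-dimensional \emph{unconstrained} market with volatility $\widetilde{\sigma}=D\sigma$ and the $\calG_t$-measurable coefficients $(\widetilde{r}_s)_{s\in[t,T]}$, $(\widetilde{\mu}_s)_{s\in[t,T]}$ stated above, the optimizer being $\widetilde{\pi}_s=\frac{1}{1-\gamma}(\widetilde{\sigma}\widetilde{\sigma}^\transp)^{-1}(\widetilde{\mu}_s-\widetilde{r}_s\mathbf{1}_{d-1})$ from~\eqref{eq:optimal_pi_tilde_time-dependent}. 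It therefore suffices to compute $\E_{\mu^{(t)}}[U_\gamma(X_T)]$ for the wealth $X_T$ of this reduced market run under $\widetilde{\pi}$.

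Next I would write the reduced terminal wealth as $X_T = x\exp(\int_t^T(\widetilde{r}_s+\widetilde{\pi}_s^\transp(\widetilde{\mu}_s-\widetilde{r}_s\mathbf{1}_{d-1})-\tfrac12\lVert\widetilde{\sigma}^\transp\widetilde{\pi}_s\rVert^2)\,\rmd s+\int_t^T\widetilde{\pi}_s^\transp\widetilde{\sigma}\,\rmd\widetilde{W}_s)$, where $\widetilde{W}$ is the driving Brownian motion of the reduced market under $\mathbb{P}^\mu$. For power utility I raise this to the power $\gamma$ and take expectations; conditioning on $\calG_t$, under which $\widetilde{r}_s$, $\widetilde{\mu}_s$ and hence $\widetilde{\pi}_s$ are deterministic functions of $s$, the exponential martingale contributes the log-normal factor $\exp(\tfrac{\gamma^2}{2}\int_t^T\lVert\widetilde{\sigma}^\transp\widetilde{\pi}_s\rVert^2\,\rmd s)$, so that $\E_{\mu^{(t)}}[X_T^\gamma]=x^\gamma\exp(\gamma\int_t^T(\widetilde{r}_s+\widetilde{\pi}_s^\transp(\widetilde{\mu}_s-\widetilde{r}_s\mathbf{1}_{d-1})-\tfrac{1-\gamma}{2}\lVert\widetilde{\sigma}^\transp\widetilde{\pi}_s\rVert^2)\,\rmd s)$. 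For logarithmic utility I instead take the logarithm directly; the stochastic integral is a true martingale with vanishing expectation, leaving $\E_{\mu^{(t)}}[\log X_T]=\log x+\int_t^T(\widetilde{r}_s+\widetilde{\pi}_s^\transp(\widetilde{\mu}_s-\widetilde{r}_s\mathbf{1}_{d-1})-\tfrac12\lVert\widetilde{\sigma}^\transp\widetilde{\pi}_s\rVert^2)\,\rmd s$.

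Finally I substitute the explicit $\widetilde{\pi}_s$. Writing $Q_s=(\widetilde{\mu}_s-\widetilde{r}_s\mathbf{1}_{d-1})^\transp(\widetilde{\sigma}\widetilde{\sigma}^\transp)^{-1}(\widetilde{\mu}_s-\widetilde{r}_s\mathbf{1}_{d-1})$, one gets $\widetilde{\pi}_s^\transp(\widetilde{\mu}_s-\widetilde{r}_s\mathbf{1}_{d-1})=\tfrac{1}{1-\gamma}Q_s$ and $\lVert\widetilde{\sigma}^\transp\widetilde{\pi}_s\rVert^2=\tfrac{1}{(1-\gamma)^2}Q_s$, so both integrands collapse to $\widetilde{r}_s+\tfrac{1}{2(1-\gamma)}Q_s$ (and, for $\gamma=0$, to $\widetilde{r}_s+\tfrac12 Q_s$), which is precisely the asserted formula. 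The only point requiring care is the justification of the conditional log-normal moment computation: since $K_t$ is bounded the coefficients are bounded and $\calG_t$-measurable, so conditional on $\calG_t$ the reduced market has deterministic, bounded coefficients and the relevant exponential is a genuine martingale with the stated moment. This is the observable-coefficient Merton setting already invoked via Westphal~\cite[App.~B]{westphal_2019} in the previous proof, and the tower property over $\calG_t$ then yields the unconditional identity.
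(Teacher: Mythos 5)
Your proposal is correct and follows essentially the same route as the paper's proof: reduce via Proposition~\ref{prop:optimal_strategy_non-robust_time-dependent} to the $(d-1)$-dimensional unconstrained market with the transformed, $\calG_t$-measurable coefficients, take the optimizer $\widetilde{\pi}_s = \frac{1}{1-\gamma}(\widetilde{\sigma}\widetilde{\sigma}^\transp)^{-1}(\widetilde{\mu}_s-\widetilde{r}_s\mathbf{1}_{d-1})$ from~\eqref{eq:optimal_pi_tilde_time-dependent}, and plug it into the value. You simply spell out the conditional log-normal moment computation and the collapse of the integrand to $\widetilde{r}_s+\frac{1}{2(1-\gamma)}Q_s$ that the paper's proof summarizes as ``plugging this optimal strategy in.''
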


\begin{proof}
	The representation in the corollary follows, just like in the proof of Corollary~\ref{cor:optimal_utility_non-robust}, by the fact that we have reduced our constrained utility maximization problem to a $(d-1)$-dimensional unconstrained problem where the parameters of our transformed financial market are exactly those that are listed in the corollary.
	We have seen that the optimal strategy in this $(d-1)$-dimensional market fulfills
	\[ \widetilde{\pi}_s = \frac{1}{1-\gamma}(\widetilde{\sigma}\widetilde{\sigma}^\transp)^{-1}(\widetilde{\mu}_s-\widetilde{r}_s\mathbf{1}_{d-1}) \]
	for all $s\in[t,T]$. Plugging this optimal strategy in yields the expression from the corollary.
\end{proof}

\subsection{The worst-case drift process}

In the following, we compute the worst-case future drift process that is determined at time $t\in[0,T]$, i.e.\ the drift process $\mu^{(t)}\in\calK^{(t)}$ for which
\[ \sup_{\pi^{(t)}\in\calA_h(t,x)} \E_{\mu^{(t)}}\Bigl[U_\gamma\bigl(X^{t,x,\pi^{(t)}}_T\bigr)\Bigr] \]
is minimized. Due to the previous corollary we see that this is equivalent to the minimization of the integral
\begin{equation}\label{eq:what_has_to_be_minimized_time-dependent}
	\int_t^T\Bigl( \widetilde{r}_s+\frac{1}{2}\bigl(\widetilde{\mu}_s-\widetilde{r}_s\mathbf{1}_{d-1}\bigr)^\transp(\widetilde{\sigma}\widetilde{\sigma}^\transp )^{-1}\bigl(\widetilde{\mu}_s-\widetilde{r}_s\mathbf{1}_{d-1}\bigr) \Bigr)\rmd s.
\end{equation}
When plugging the representations for $\widetilde{\mu}$, $\widetilde{r}$ and $\widetilde{\sigma}$ back in, we obtain an expression that depends on $(\mu^{(t)}_s)_{s\in[t,T]}$ again. By the same calculations as in the setting with constant drift we deduce that minimizing~\eqref{eq:what_has_to_be_minimized_time-dependent} is equivalent to minimizing
\[ \int_t^T\Bigl(\frac{1}{2(1-\gamma)}(\mu^{(t)}_s)^\transp A\mu^{(t)}_s+hc^\transp\mu^{(t)}_s \Bigr)\rmd s. \]
But the minimization of this integral is equivalent to a pointwise minimization of
\[ K_t\ni\mu \mapsto \frac{1}{2(1-\gamma)}\mu^\transp A\mu+hc^\transp\mu. \]
Now it is straightforward to see that we can use the results from Section~\ref{sec:recap_of_results_for_constant_drift} to obtain the worst-case drift process $(\mu^{(t),*}_s)_{s\in[t,T]}$. Here, $\mu^{(t),*}_s$ is for any $s\in[t,T]$ obtained as the minimizer of the above function on $K_t$. Recall that the uncertainty set is an ellipsoid of the form
\[ K_t = \{ \mu\in\R^d \,|\, (\mu-\nu_t)^\transp\Gamma_t^{-1}(\mu-\nu_t)\leq \kappa_t^2 \}. \]
We have assumed that $\Gamma_t$ is a symmetric positive-definite matrix in $\R^{d\times d}$. In the following we use the representation $\Gamma_t=\tau_t\tau_t^\transp$ where $\tau_t\in\R^{d\times d}$ is a nonsingular matrix.

\begin{corollary}\label{cor:solution_of_the_inf_sup_problem_time-dependent}
	We fix some $t\in[0,T]$ and let $0=\lambda_{t,1}<\lambda_{t,2}\leq\cdots\leq\lambda_{t,d}$ denote the eigenvalues of $\tau_t^\transp A\tau_t$, and
	\[ v_{t,1}=\frac{1}{\lVert \tau_t^{-1}\ones \rVert}\tau_t^{-1}\ones, v_{t,2},\dots,v_{t,d}\in\R^d \]
	the respective orthogonal eigenvectors with $\lVert v_{t,i}\rVert=1$ for all $i=1,\dots, d$.
	Then
	\[ \adjustlimits \inf_{\mu^{(t)}\in \calK^{(t)}} \sup_{\pi^{(t)}\in\calA_h(t,x)} \E_{\mu^{(t)}}\Bigl[U_\gamma\bigl(X^{t,x,\pi^{(t)}}_T\bigr)\Bigr] = \E_{\mu^{(t),*}}\Bigl[U_\gamma\bigl(X^{t,x,\pi^{(t),*}}_T\bigr)\Bigr], \]
	where
	\[ \mu^{(t),*}_s=\nu_t-\tau_t\sum_{i=1}^d \biggl(\frac{\lambda_{t,i}}{1-\gamma}+\frac{h}{\psi_t(\kappa_t)\lVert \tau_t^{-1}\ones \rVert}\biggr)^{-1}\Bigl\langle h\tau_t^\transp c+\frac{\lambda_{t,i}}{1-\gamma}\tau_t^{-1}\nu_t, v_{t,i}\Bigr\rangle\, v_{t,i} \]
	for all $s\in[t,T]$, and where $\psi_t(\kappa_t)\in(0,\kappa_t]$ is uniquely determined by $\lVert\tau_t^{-1}(\mu^{(t),*}_s-\nu_t)\rVert=\kappa_t$. The strategy $(\pi^{(t),*}_s)_{s\in[t,T]}$ has the form
	\[ \pi^{(t),*}_s = \frac{1}{1-\gamma}A\mu^{(t),*}_s +hc \]
	for all $s\in[t,T]$.
\end{corollary}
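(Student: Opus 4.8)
The plan is to reduce the local inf--sup problem to the constant-drift result of Theorem~\ref{thm:solution_of_the_inf_sup_problem} by exploiting the fact that, once the inner supremum has been evaluated, the dependence on the candidate drift process $\mu^{(t)}$ decouples across time. First I would invoke the preceding corollary to replace the inner supremum $\sup_{\pi^{(t)}\in\calA_h(t,x)}\E_{\mu^{(t)}}[U_\gamma(X^{t,x,\pi^{(t)}}_T)]$ by its explicit value. In each of the three cases this value is a \emph{monotone increasing} function of the integral $\int_t^T(\widetilde r_s+\tfrac12(\widetilde\mu_s-\widetilde r_s\mathbf{1}_{d-1})^\transp(\widetilde\sigma\widetilde\sigma^\transp)^{-1}(\widetilde\mu_s-\widetilde r_s\mathbf{1}_{d-1}))\,\rmd s$: for $\gamma\in(0,1)$ the prefactor $x^\gamma/\gamma$ is positive and $u\mapsto\exp(\gamma u)$ is increasing, for $\gamma<0$ the prefactor is negative while $u\mapsto\exp(\gamma u)$ is decreasing, and for $\gamma=0$ the value depends on the integral additively, so in every case minimizing the inner value over $\mu^{(t)}\in\calK^{(t)}$ amounts to minimizing that integral, and hence, by the same algebraic simplification as in the constant-drift setting, to minimizing $\int_t^T(\tfrac{1}{2(1-\gamma)}(\mu^{(t)}_s)^\transp A\mu^{(t)}_s+hc^\transp\mu^{(t)}_s)\,\rmd s$.

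Next I would carry out the key decoupling step. Since the integrand at time $s$ depends only on $\mu^{(t)}_s$ and the admissibility constraint $\mu^{(t)}_s\in K_t$ is imposed pointwise in $s$, I would argue that
\[ \inf_{\mu^{(t)}\in\calK^{(t)}}\int_t^T\Bigl(\tfrac{1}{2(1-\gamma)}(\mu^{(t)}_s)^\transp A\mu^{(t)}_s+hc^\transp\mu^{(t)}_s\Bigr)\rmd s = (T-t)\inf_{\mu\in K_t}\Bigl(\tfrac{1}{2(1-\gamma)}\mu^\transp A\mu+hc^\transp\mu\Bigr), \]
the right-hand infimum being the static quadratic problem on the ellipsoid $K_t$. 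This static problem is exactly the one solved, in the constant-drift case, in the course of proving Theorem~\ref{thm:solution_of_the_inf_sup_problem}; applying that theorem with $\nu$, $\Gamma=\tau\tau^\transp$, $\kappa$ replaced by the time-$t$ data $\nu_t$, $\Gamma_t=\tau_t\tau_t^\transp$, $\kappa_t$ identifies its unique minimizer with the vector $\mu^{(t),*}_s$ displayed in the statement, which does not depend on $s$. Because $\nu_t$, $\tau_t$ and $\kappa_t$, and hence the eigenpairs of $\tau_t^\transp A\tau_t$ and the scalar $\psi_t(\kappa_t)$, are $\calG_t$-measurable, the constant-in-$s$ process $\mu^{(t),*}$ is $\calG_t$-measurable and lies in $\calK^{(t)}$, so the infimum over $\calK^{(t)}$ is actually attained at it.

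Finally, the stated form of the optimal strategy $\pi^{(t),*}_s=\tfrac{1}{1-\gamma}A\mu^{(t),*}_s+hc$ follows by feeding the worst-case drift $\mu^{(t),*}$ into Proposition~\ref{prop:optimal_strategy_non-robust_time-dependent}, which gives the non-robust optimizer of the inner problem for that particular drift, so that the common value of the inf--sup problem equals $\E_{\mu^{(t),*}}[U_\gamma(X^{t,x,\pi^{(t),*}}_T)]$. I expect the only genuinely delicate point to be the justification of the decoupling/interchange above: one must confirm that restricting to $\calG_t$-measurable processes with values in $K_t$ does not prevent the infimum from being realized by the pointwise minimizer. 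This is clean here precisely because the minimizer of the static problem is available in closed form in terms of $\calG_t$-measurable quantities, so the constant selection is admissible and no measurable-selection argument beyond this explicit formula is required.
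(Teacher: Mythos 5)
Your proposal is correct and follows essentially the same route as the paper's own proof: reduce the inner supremum to its explicit value via the preceding corollary, use monotonicity to pass to minimizing the integral of the quadratic $\frac{1}{2(1-\gamma)}\mu^\transp A\mu + hc^\transp\mu$, decouple pointwise to the static problem on $K_t$ solved exactly as in Theorem~\ref{thm:solution_of_the_inf_sup_problem}, and obtain the strategy from Proposition~\ref{prop:optimal_strategy_non-robust_time-dependent}. Your explicit treatment of the three monotonicity cases and of the admissibility of the constant, $\calG_t$-measurable selection only makes precise steps the paper treats implicitly.
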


\begin{proof}
	We have seen that the worst-case drift process $(\mu^{(t),*}_s)_{s\in[t,T]}$ is the one where $\mu^{(t),*}_s$ is for any $s\in[t,T]$ equal to the minimizer of the function
	\[ \mu \mapsto \frac{1}{2(1-\gamma)}\mu^\transp A\mu+hc^\transp\mu \]
	over all $\mu\in K_t$. So we can do the minimization as in Section~\ref{sec:recap_of_results_for_constant_drift}. We know that the matrix $\tau_t^\transp A\tau_t\in\R^{d\times d}$ is symmetric and positive definite with
	\[ \mathrm{ker}(\tau_t^\transp A\tau_t)=\mathrm{span}(\{\tau_t^{-1}\ones\}). \]
	Now the representation of $\mu^{(t),*}_s$ follows as in Theorem~\ref{thm:solution_of_the_inf_sup_problem}. The form of the optimal strategy $\pi^{(t),*}$ then follows from Proposition~\ref{prop:optimal_strategy_non-robust_time-dependent}.
\end{proof}

The preceding corollary shows that the problem
\[ \adjustlimits \inf_{\mu^{(t)}\in \calK^{(t)}} \sup_{\pi^{(t)}\in\calA_h(t,x)} \E_{\mu^{(t)}}\Bigl[U_\gamma\bigl(X^{t,x,\pi^{(t)}}_T\bigr)\Bigr] \]
is solved by drift process $(\mu^{(t),*}_s)_{s\in[t,T]}$ and strategy $(\pi^{(t),*}_s)_{s\in[t,T]}$. Note that both the worst-case drift process and the optimal strategy are constant on $[t,T]$ and $\calG_t$-measurable. This is due to the setup of the model in which investors assume that the future drift process will take values in the ellipsoid $K_t$ only.

The problem above is the dual to our original problem
\[ \adjustlimits \sup_{\pi^{(t)}\in\calA_h(t,x)} \inf_{\mu^{(t)}\in \calK^{(t)}} \E_{\mu^{(t)}}\Bigl[U_\gamma\bigl(X^{t,x,\pi^{(t)}}_T\bigr)\Bigr]. \]
To ensure that $\mu^{(t),*}$ and $\pi^{(t),*}$ are also a solution to this problem we have to show that $\mu^{(t),*}$ is the worst drift process in the set $\calK^{(t)}$, given that an investor chooses trading strategy $\pi^{(t),*}$. In that case, the infimum and the supremum interchange and we can deduce that $\pi^{(t),*}$ and $\mu^{(t),*}$ also establish a solution to our original robust optimization problem.

\subsection{A minimax theorem}

We proceed as in Section~\ref{sec:recap_of_results_for_constant_drift} and note that the strategy $\pi^{(t),*}$ from the previous corollary satisfies
\[ \pi^{(t),*}_s = -\frac{h}{\psi_t(\kappa_t)\lVert\tau_t^{-1}\ones\rVert}\Gamma_t^{-1}\bigl(\mu^{(t),*}_s-\nu_t\bigr) \]
for all $s\in[t,T]$. This can be proven by analogy with Lemma~\ref{lem:representation_of_pi_star}. This observation helps to prove the following proposition.

\begin{proposition}
	The drift process $(\mu^{(t)}_s)_{s\in[t,T]}$ that attains for any $x>0$ the minimum in
	\[ \inf_{\mu^{(t)}\in \calK^{(t)}} \E_{\mu^{(t)}}\Bigl[U_\gamma\bigl(X^{t,x,\pi^{(t),*}}_T\bigr)\Bigr] \]
	is $(\mu^{(t),*}_s)_{s\in[t,T]}$, i.e.\ $\mu^{(t),*}$ is the worst possible drift process, given that an investor chooses the strategy $\pi^{(t),*}$.
\end{proposition}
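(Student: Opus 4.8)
The plan is to fix the strategy $\pi^{(t),*}$ and minimise the objective over all admissible drift processes $\mu^{(t)}\in\calK^{(t)}$, showing the minimiser is $\mu^{(t),*}$; this mirrors the constant-drift argument of Proposition~\ref{prop:mu_star_is_worst_for_pi_star}, applied pointwise in $s$. Since $\pi^{(t),*}$ and every $\mu^{(t)}\in\calK^{(t)}$ are $\calG_t$-measurable and constant on $[t,T]$, I would first evaluate the objective explicitly. Plugging $\pi=\pi^{(t),*}$ into the terminal-wealth formula and taking the conditional expectation given $\calG_t$---using that $\int_t^T(\pi^{(t),*}_s)^\transp\sigma\,\rmd W^\mu_s$ is, conditionally, centred Gaussian with variance $\int_t^T\lVert\sigma^\transp\pi^{(t),*}_s\rVert^2\,\rmd s$, where boundedness of $K_t$ supplies the required exponential moments---gives for $\gamma\neq 0$
\[ \E_{\mu^{(t)}}\bigl[U_\gamma(X^{t,x,\pi^{(t),*}}_T)\bigr] = \frac{x^\gamma}{\gamma}\exp\biggl(\gamma\int_t^T\Bigl(r+(\pi^{(t),*}_s)^\transp(\mu^{(t)}_s-r\ones)-\tfrac{1-\gamma}{2}\lVert\sigma^\transp\pi^{(t),*}_s\rVert^2\Bigr)\rmd s\biggr), \]
and for $\gamma=0$ the analogous affine expression, the stochastic-integral term vanishing in expectation.

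The second step is to isolate the drift-dependence: the only term involving $\mu^{(t)}$ is $I(\mu^{(t)}):=\int_t^T(\pi^{(t),*}_s)^\transp\mu^{(t)}_s\,\rmd s$. I would then verify, case by case, that minimising the objective is equivalent to minimising $I$. For $\gamma>0$ the prefactor $x^\gamma/\gamma$ is positive and the exponent carries a positive factor $\gamma$, so the objective is increasing in $I$; for $\gamma<0$ the prefactor is negative while the exponent carries the negative factor $\gamma$, and the two sign reversals cancel, again making the objective increasing in $I$; for $\gamma=0$ the objective is affine and increasing in $I$. In particular the minimiser is independent of $x$, as the statement asserts. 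Hence in all cases the problem reduces to the pointwise minimisation of the linear map $\mu\mapsto(\pi^{(t),*}_s)^\transp\mu$ over $\mu\in K_t$, for each $s\in[t,T]$.

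Finally, I would solve this linear programme over the ellipsoid using the representation $\pi^{(t),*}_s=-\frac{h}{\psi_t(\kappa_t)\lVert\tau_t^{-1}\ones\rVert}\Gamma_t^{-1}(\mu^{(t),*}_s-\nu_t)$ stated just before the proposition. Since the scalar prefactor is strictly positive, minimising $(\pi^{(t),*}_s)^\transp\mu$ over $K_t$ is equivalent to maximising $\mu\mapsto(\mu^{(t),*}_s-\nu_t)^\transp\Gamma_t^{-1}\mu$ over $K_t$. The maximiser of a linear functional $b^\transp\mu$ over $K_t=\{\mu:(\mu-\nu_t)^\transp\Gamma_t^{-1}(\mu-\nu_t)\leq\kappa_t^2\}$ is $\nu_t+\kappa_t\,\Gamma_t b/\sqrt{b^\transp\Gamma_t b}$; taking $b=\Gamma_t^{-1}(\mu^{(t),*}_s-\nu_t)$ gives $\Gamma_t b=\mu^{(t),*}_s-\nu_t$ and $b^\transp\Gamma_t b=(\mu^{(t),*}_s-\nu_t)^\transp\Gamma_t^{-1}(\mu^{(t),*}_s-\nu_t)=\lVert\tau_t^{-1}(\mu^{(t),*}_s-\nu_t)\rVert^2=\kappa_t^2$ by the defining property of $\mu^{(t),*}$. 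Substituting, the maximiser collapses to $\nu_t+(\mu^{(t),*}_s-\nu_t)=\mu^{(t),*}_s$, which is exactly the claimed worst-case drift.

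I expect the only delicate point to be the uniform treatment of the sign of $\gamma$ in the reduction of the second step (the cancellation of sign reversals for $\gamma<0$), together with justifying the conditional-expectation computation. Once the objective is reduced to the linear integral $I$, the remainder is the explicit linear optimisation over the ellipsoid, which reproduces the constant-drift computation pointwise in $s$.
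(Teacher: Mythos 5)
Your proposal is correct and follows essentially the same route as the paper's proof: evaluate $\E_{\mu^{(t)}}\bigl[U_\gamma\bigl(X^{t,x,\pi^{(t),*}}_T\bigr)\bigr]$ in closed form using that $\pi^{(t),*}$ and $\mu^{(t)}$ are $\calG_t$-measurable and constant on $[t,T]$, reduce the minimisation to the pointwise linear problem $\mu\mapsto(\pi^{(t),*}_s)^\transp\mu$ on $K_t$, and conclude via the representation $\pi^{(t),*}_s=-\frac{h}{\psi_t(\kappa_t)\lVert\tau_t^{-1}\ones\rVert}\Gamma_t^{-1}\bigl(\mu^{(t),*}_s-\nu_t\bigr)$ together with $\lVert\tau_t^{-1}(\mu^{(t),*}_s-\nu_t)\rVert=\kappa_t$. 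The only (cosmetic) differences are that you solve the linear optimisation over the ellipsoid from scratch rather than citing the constant-drift result \cite[Prop.~3.11]{sass_westphal_2020}, and that you spell out the sign analysis in $\gamma$ which the paper leaves implicit.
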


\begin{proof}
	We take an arbitrary $\mu=\mu^{(t)}\in\calK^{(t)}$. Note that in case $\gamma\neq 0$ we can write
	\begin{equation*}
		\begin{aligned}
			&\E_{\mu}\Bigl[U_\gamma\bigl(X^{t,x,\pi^{(t),*}}_T\bigr)\Bigr]\\
			&= \frac{x^\gamma}{\gamma}\rme^{\gamma r(T-t)}\E_{\mu}\biggl[\exp\biggl(\!\gamma\!\int_t^T\!\!\! \Bigl((\pi^{(t),*}_s)\!^\transp\!(\mu_s-r\ones) -\frac{1}{2}\lVert\sigma^\transp\!\pi^{(t),*}_s\rVert^2\Bigr)\rmd s + \!\gamma\!\int_t^T \!\!\!(\pi^{(t),*}_s)\!^\transp\! \sigma\,\rmd W^\mu_s \!\biggr)\biggr] \\
			&= \frac{x^\gamma}{\gamma}\rme^{\gamma r(T-t)}\exp\biggl(\gamma\int_t^T \Bigl((\pi^{(t),*}_s)^\transp(\mu_s-r\ones) -\frac{1-\gamma}{2}\lVert\sigma^\transp\pi^{(t),*}_s\rVert^2\Bigr)\rmd s\biggl).
		\end{aligned}
	\end{equation*}
	In case $\gamma=0$ we have
	\[ \E_{\mu}\Bigl[\log\bigl(X^{t,x,\pi^{(t),*}}_T\bigr)\Bigr]
	=\log(x)+ r(T-t)+\int_t^T \Bigl((\pi^{(t),*}_s)^\transp(\mu_s-r\ones) -\frac{1}{2}\lVert\sigma^\transp\pi^{(t),*}_s\rVert^2\Bigr)\rmd s. \]
	In both cases, the drift process $(\mu_s)_{s\in[t,T]}\in \calK^{(t)}$ that minimizes this expression is the one that minimizes
	\[ \int_t^T (\pi^{(t),*}_s)^\transp\mu_s\,\rmd s. \]
	Since $(\pi^{(t),*}_s)_{s\in[t,T]}$ is constant, we find the minimizer as the minimizer of $(\pi^{(t),*}_s)^\transp\mu_s$. Recall that
	\[ \pi^{(t),*}_s = -\frac{h}{\psi_t(\kappa_t)\lVert\tau_t^{-1}\ones\rVert}\Gamma_t^{-1}\bigl(\mu^{(t),*}_s-\nu_t\bigr). \]
	It follows that
	\[ (\pi^{(t),*}_s)^\transp\Gamma_t\pi^{(t),*}_s = \frac{h^2}{\psi_t(\kappa_t)^2\lVert\tau_t^{-1}\ones\rVert^2}\bigl(\mu^{(t),*}_s-\nu_t\bigr)^\transp\Gamma_t^{-1}\bigl(\mu^{(t),*}_s-\nu_t\bigr) = \frac{h^2\kappa_t^2}{\psi_t(\kappa_t)^2\lVert\tau_t^{-1}\ones\rVert^2}. \]
	Knowing that $\psi_t(\kappa_t)>0$ we can deduce
	\[ \sqrt{(\pi^{(t),*}_s)^\transp\Gamma_t\pi^{(t),*}_s}=\frac{h\kappa_t}{\psi_t(\kappa_t)\lVert\tau_t^{-1}\ones\rVert}. \]
	The drift process $\mu^{(t),*}_s$ at time $s$ can thus be rewritten in the form
	\[ \mu^{(t),*}_s=\nu_t-\frac{\psi_t(\kappa_t)\lVert\tau_t^{-1}\ones\rVert}{h}\Gamma_t\pi^{(t),*}_s = \nu_t-\frac{\kappa_t}{\sqrt{(\pi^{(t),*}_s)^\transp\Gamma_t\pi^{(t),*}_s}}\Gamma_t\pi^{(t),*}_s. \]
	This is exactly the vector that minimizes $(\pi^{(t),*}_s)^\transp\mu$ over all $\mu\in K_t$, see the proof of \cite[Prop.~3.11]{sass_westphal_2020}.
	Hence, $\mu^{(t),*}$ is the drift process that minimizes the expected utility of terminal wealth for an investor who chooses strategy $\pi^{(t),*}$.
\end{proof}

The previous proposition establishes an equilibrium result. By definition, the strategy $\pi^{(t),*}$ is optimal for the drift $\mu^{(t),*}$. Due to the proposition, it also holds that $\mu^{(t),*}$ is the worst drift given that an investor chooses strategy $\pi^{(t),*}$. Hence, we see that $(\pi^{(t),*},\mu^{(t),*})$ is a saddle point of the optimization problem
\[ \adjustlimits \sup_{\pi^{(t)}\in\calA_h(t,x)} \inf_{\mu^{(t)}\in \calK^{(t)}} \E_{\mu^{(t)}}\Bigl[U_\gamma\bigl(X^{t,x,\pi^{(t)}}_T\bigr)\Bigr]. \]
In particular, the supremum and infimum can be interchanged. We obtain the following minimax theorem.

\begin{theorem}\label{thm:minimax_theorem}
	Let $t\in[0,T]$. Then for $x>0$
	\begin{equation*}
		\begin{aligned}
			\adjustlimits \sup_{\pi^{(t)}\in\calA_h(t,x)} \inf_{\mu^{(t)}\in \calK^{(t)}} \E_{\mu^{(t)}}\Bigl[U_\gamma\bigl(X^{t,x,\pi^{(t)}}_T\bigr)\Bigr]
			&= \E_{\mu^{(t),*}}\Bigl[U_\gamma\bigl(X^{t,x,\pi^{(t),*}}_T\bigr)\Bigr] \\
			&= \adjustlimits \inf_{\mu^{(t)}\in \calK^{(t)}} \sup_{\pi^{(t)}\in\calA_h(t,x)} \E_{\mu^{(t),*}}\Bigl[U_\gamma\bigl(X^{t,x,\pi^{(t),*}}_T\bigr)\Bigr],
		\end{aligned}
	\end{equation*}
	where $\mu^{(t),*}$ and $\pi^{(t),*}$ are defined as in Corollary~\ref{cor:solution_of_the_inf_sup_problem_time-dependent}.
\end{theorem}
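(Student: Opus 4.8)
The plan is to deduce the minimax identity from a saddle-point argument, since all the substantive work has already been carried out in the preceding results. First I would record the two one-sided optimality properties that pin down the pair $(\pi^{(t),*},\mu^{(t),*})$. On the one hand, Proposition~\ref{prop:optimal_strategy_non-robust_time-dependent} guarantees that $\pi^{(t),*}$ is the optimal strategy against the fixed drift $\mu^{(t),*}$, so
\[ \E_{\mu^{(t),*}}\Bigl[U_\gamma\bigl(X^{t,x,\pi^{(t)}}_T\bigr)\Bigr] \leq \E_{\mu^{(t),*}}\Bigl[U_\gamma\bigl(X^{t,x,\pi^{(t),*}}_T\bigr)\Bigr] \]
for every $\pi^{(t)}\in\calA_h(t,x)$. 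On the other hand, the preceding proposition shows that $\mu^{(t),*}$ is the worst drift in $\calK^{(t)}$ when the investor uses $\pi^{(t),*}$, i.e.
\[ \E_{\mu^{(t),*}}\Bigl[U_\gamma\bigl(X^{t,x,\pi^{(t),*}}_T\bigr)\Bigr] \leq \E_{\mu^{(t)}}\Bigl[U_\gamma\bigl(X^{t,x,\pi^{(t),*}}_T\bigr)\Bigr] \]
for every $\mu^{(t)}\in\calK^{(t)}$. Together these say that $(\pi^{(t),*},\mu^{(t),*})$ is a \emph{saddle point} of the payoff $(\pi^{(t)},\mu^{(t)})\mapsto\E_{\mu^{(t)}}[U_\gamma(X^{t,x,\pi^{(t)}}_T)]$, exactly in analogy with the constant-drift case behind Theorem~\ref{thm:duality_result}.

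Second, I would invoke the elementary weak-duality inequality, which holds for any payoff and any domains without further hypotheses: evaluating at a fixed pair and then optimizing each factor separately gives
\[ \adjustlimits \sup_{\pi^{(t)}\in\calA_h(t,x)} \inf_{\mu^{(t)}\in\calK^{(t)}} \E_{\mu^{(t)}}\Bigl[U_\gamma\bigl(X^{t,x,\pi^{(t)}}_T\bigr)\Bigr] \leq \adjustlimits \inf_{\mu^{(t)}\in\calK^{(t)}} \sup_{\pi^{(t)}\in\calA_h(t,x)} \E_{\mu^{(t)}}\Bigl[U_\gamma\bigl(X^{t,x,\pi^{(t)}}_T\bigr)\Bigr]. \]

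Third, I would close the chain using the saddle point. The worst-case property rewrites the central quantity as $\E_{\mu^{(t),*}}\Bigl[U_\gamma\bigl(X^{t,x,\pi^{(t),*}}_T\bigr)\Bigr]=\inf_{\mu^{(t)}\in\calK^{(t)}}\E_{\mu^{(t)}}\Bigl[U_\gamma\bigl(X^{t,x,\pi^{(t),*}}_T\bigr)\Bigr]$, and since $\pi^{(t),*}\in\calA_h(t,x)$ this infimum is a candidate in the outer supremum, hence at most the sup-inf value. Symmetrically, the optimality property gives $\E_{\mu^{(t),*}}\Bigl[U_\gamma\bigl(X^{t,x,\pi^{(t),*}}_T\bigr)\Bigr]=\sup_{\pi^{(t)}\in\calA_h(t,x)}\E_{\mu^{(t),*}}\Bigl[U_\gamma\bigl(X^{t,x,\pi^{(t)}}_T\bigr)\Bigr]$, and since $\mu^{(t),*}\in\calK^{(t)}$ this is at least the inf-sup value. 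Chaining these two bounds with weak duality yields
\[ \text{inf-sup} \;\leq\; \E_{\mu^{(t),*}}\Bigl[U_\gamma\bigl(X^{t,x,\pi^{(t),*}}_T\bigr)\Bigr] \;\leq\; \text{sup-inf} \;\leq\; \text{inf-sup}, \]
forcing equality throughout with common value $\E_{\mu^{(t),*}}[U_\gamma(X^{t,x,\pi^{(t),*}}_T)]$. The right-hand equality asserted in the theorem is moreover precisely the content of Corollary~\ref{cor:solution_of_the_inf_sup_problem_time-dependent}.

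I expect no genuine obstacle at this final stage: the hard part, namely proving that a saddle point exists and computing it explicitly, was already accomplished in Corollary~\ref{cor:solution_of_the_inf_sup_problem_time-dependent} and the preceding proposition. The one point requiring care is that $\mu^{(t),*}$ be truly \emph{feasible} in the inner problems, i.e.\ that the pointwise minimizer over $K_t$ assembles into a bona fide element of $\calK^{(t)}$ (a $\calG_t$-measurable drift, constant in $s\in[t,T]$); this is guaranteed by the setup of the local problem, in which investors freeze the uncertainty set at $K_t$. Consequently the proof is a direct transcription of the constant-drift saddle-point argument, now applied locally at each fixed time $t$.
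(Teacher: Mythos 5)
Your proof is correct and follows essentially the same route as the paper: the paper also establishes the saddle-point property from Corollary~\ref{cor:solution_of_the_inf_sup_problem_time-dependent} and the preceding proposition, and then interchanges supremum and infimum exactly as in the constant-drift case (Theorem~\ref{thm:duality_result}). Your explicit chain of inequalities via weak duality is precisely the argument the paper compresses into ``analogous to the proof of Theorem~\ref{thm:duality_result}.''
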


\begin{proof}
	The proof is analogous to the proof of Theorem~\ref{thm:duality_result}.
\end{proof}

The previous theorem solves our original local optimization problem~\eqref{eq:value_function_robust_power_constrained_time-dependent} for a fixed time $t\in[0,T]$. It shows that the best strategy for an investor in this robust optimization problem is the strategy $(\pi^{(t),*}_s)_{s\in[t,T]}$ with
\[ \pi^{(t),*}_s=\frac{1}{1-\gamma}A\mu^{(t),*}_s+hc \]
for all $s\in[t,T]$, where $(\mu^{(t),*}_s)_{s\in[t,T]}$ is defined as in Corollary~\ref{cor:solution_of_the_inf_sup_problem_time-dependent}. The process $(\mu^{(t),*}_s)_{s\in[t,T]}$ can be interpreted as the worst possible realization of the future drift process from the investor's point of view at time $t$. The worst-case drift and optimal strategy in this setting are constant on $[t,T]$. Since these do not depend on $X_t^\pi=x$, this would also be true for the unconditional case. 
This is due to the assumption of the investor that the future drift will take values in the set $K_t$ only, where $K_t$ is determined at time $t$ using all available information, i.e.\ $K_t$ is $\calG_t$-measurable.

In our continuous-time setting it is likely that the information about the unobservable true drift process changes continuously, therefore also the uncertainty set $K_t$ will be updated continuously in time. At each time $t\in[0,T]$, the investor will revise both the uncertainty set and the optimization problem
\[ \adjustlimits \sup_{\pi^{(t)}\in\calA_h(t,x)} \inf_{\mu^{(t)}\in \calK^{(t)}} \E_{\mu^{(t)}}\Bigl[U_\gamma\bigl(X^{t,x,\pi^{(t)}}_T\bigr)\Bigr]. \]
The strategy that is realized by the investor can then be found as $(\pi^*_t)_{t\in[0,T]}$ with $\pi^*_t=\pi^{(t),*}_t$ for any $t\in[0,T]$. It has the form
\[ \pi^*_t=\frac{1}{1-\gamma}A\mu^*_t +hc \]
where $(\mu^*_t)_{t\in[0,T]}$ is constructed via $\mu^*_t=\mu^{(t),*}_t$ for all $t\in[0,T]$. Note that the processes $(\mu^*_t)_{t\in[0,T]}$ and $(\pi^*_t)_{t\in[0,T]}$ are uniquely determined, $\mathbb{G}$-adapted and in general non-constant.

In the special case where $K_t = K_0$ for all $t\in[0,T]$, i.e.\ where our reference drift is simply a constant $\nu$, and also the matrix $\Gamma_t=\Gamma$ as well as the radius $\kappa_t=\kappa$ are constant in time, also $(\mu^*_t)_{t\in[0,T]}$ and $(\pi^*_t)_{t\in[0,T]}$ are constant in time. The constant values are the ones that we also get in the setting with constant drift and uncertainty set in Theorem~\ref{thm:solution_of_the_inf_sup_problem}.

\section{Construction of Uncertainty Sets via Filters}\label{sec:construction_of_uncertainty_sets_via_filters}

In the preceding sections we have seen how the duality approach from Sass and Westphal~\cite{sass_westphal_2020} carries over to a financial market where the drift is not necessarily constant. The generalized model allows for local uncertainty sets of the form
\[ K_t = \bigl\{ \mu\in\R^d \,\big|\, (\mu-\nu_t)^\transp\Gamma_t^{-1}(\mu-\nu_t)\leq \kappa_t^2 \bigr\}, \quad t\in[0,T]. \]
We have fixed an investor filtration $\mathbb{G}=(\calG_t)_{t\in[0,T]}$ describing the investor's information in the course of time. Our model then assumes that the processes $\nu=(\nu_t)_{t\in[0,T]}$, $\Gamma=(\Gamma_t)_{t\in[0,T]}$ and $\kappa=(\kappa_t)_{t\in[0,T]}$ are $\mathbb{G}$-adapted. Recall that $\nu$ takes values in $\R^d$, $\Gamma$ in the set of symmetric and positive-definite matrices in $\R^{d\times d}$ and $\kappa$ on the positive real line.
We motivated the reference drift $\nu$ as an estimation for the true drift, based on the information available to the investor. Here we want to make this more specific by considering the filter.

\subsection{Confidence regions as uncertainty sets}

The filter is the conditional distribution of $\mu$ given the available information $\mathbb{G}$. We take $\nu$ to be the conditional expectation of the drift given $\mathbb{G}$, i.e.\ $\nu_t=\muhat{}{t}:=\E[\mu_t\,|\,\calG_t]$ for every $t\in[0,T]$. The conditional covariance matrix
\[ \gam{}{t} := \E\bigl[(\mu_t-\muhat{}{t})(\mu_t-\muhat{}{t})^\transp\,\big|\,\calG_t\bigr] \]
measures how close the estimator $\muhat{}{t}$ is to the true drift. Note that by construction both $(\muhat{}{t})_{t\in[0,T]}$ and $(\gam{}{t})_{t\in[0,T]}$ are $\mathbb{G}$-adapted processes. The key idea for constructing uncertainty sets based on the filter is to create confidence regions centered around $\muhat{}{t}$, shaped by $\gam{}{t}$ for every $t\in[0,T]$.

Let us assume that the drift process and the investor filtration are such that the filter is normally distributed, more precisely
\[ \mu_t\,|\,\calG_t \sim \calN(\muhat{}{t},\gam{}{t}). \]
By applying a simple transformation we deduce that
\[ (\mu_t-\muhat{}{t})^\transp(\gam{}{t})^{-1}(\mu_t-\muhat{}{t}) \]
given $\calG_t$ is $\chi^2$-distributed with $d$ degrees of freedom. We fix some $\eta\in(0,1)$ and observe that a $(1-\eta)$-confidence region can be obtained from
\[ 1-\eta = \mathbb{P}\Bigl((\mu_t-\muhat{}{t})^\transp(\gam{}{t})^{-1}(\mu_t-\muhat{}{t}) \leq \chi^2_{d,1-\eta} \,\Big|\,\calG_t\Bigr). \]
Here, $\chi^2_{d,1-\eta}$ denotes the $(1-\eta)$-quantile of the $\chi^2$-distribution with $d$ degrees of freedom. This motivates the choice of
\[ K_t = \bigl\{ \mu\in\R^d \,\big|\, (\mu-\muhat{}{t})^\transp(\gam{}{t})^{-1}(\mu-\muhat{}{t}) \leq \chi^2_{d,1-\eta} \bigr\}, \quad t\in[0,T], \]
i.e.\ taking $\nu_t=\muhat{}{t}$, $\Gamma_t=\gam{}{t}$ and $\kappa_t=\sqrt{\chi^2_{d,1-\eta}}$ for every $t\in[0,T]$.

If indeed $\mu_t$ given $\calG_t$ is normally distributed, we additionally know that at any fixed time $t\in[0,T]$ the probability that $\mu_t\in K_t$, conditional on $\calG_t$, is equal to $1-\eta$. Note that $K_t$ is still a reasonable uncertainty set for $\mu_t$ in the case where the assumption about the normal distribution of the filter is not fulfilled.

\subsection{Comparison of different investor filtrations}

The preceding section explains how time-dependent uncertainty sets can be created based on filters. We now apply this to a model with an unobservable Ornstein--Uhlenbeck drift process and unbiased, normally distributed expert opinions arriving at discrete points in time. The setting is based on Gabih et al.~\cite{gabih_kondakji_sass_wunderlich_2014} as well as Sass et al.~\cite{sass_westphal_wunderlich_2017, sass_westphal_wunderlich_2021}.
Returns in this setting are modelled as
\[ \rmd R_t = \mu_t\,\rmd t+\sigma_R\,\rmd W^R_t, \]
where $W^R=(W^R_t)_{t\in[0,T]}$ is an $m$-dimensional Brownian motion with $m\geq d$ and where we assume that $\sigma_R\in\R^{d\times m}$ has full rank. The drift process $\mu$ is defined by the Ornstein--Uhlenbeck dynamics
\[ \rmd \mu_t = \alpha (\delta - \mu_t)\,\rmd t + \beta\,\rmd B_t, \]
where $\alpha$ and $\beta\in\R^{d\times d}$, $\delta\in\R^d$ and $B=(B_t)_{t\in [0,T]}$ is a $d$-dimensional Brownian motion that is independent of $W^R$. The matrices $\alpha$ and $\beta\beta^\transp$ are assumed to be symmetric and positive definite. We further make the assumption that $\mu_0\sim\calN(m_0,\Sigma_0)$ for some $m_0\in\R^d$ and some symmetric and positive-semidefinite matrix $\Sigma_0\in\R^{d\times d}$, and that $\mu_0$ is independent of the Brownian motions $W^R$ and $B$, i.e.\ $\mu$ is independent of $W^R$.

Information about the drift process can be drawn from return observations. An additional source of information in this model are expert opinions that arrive at discrete points in time and give an unbiased estimate of the state of the drift at that time point.
We assume that the expert opinions arrive at the information dates $(T_k)_{k\in I}$ and that an expert opinion at time $T_k$ is of the form
\[ Z_k = \mu_{T_k}+(\Gamma_k)^{1/2}\varepsilon_k, \]
where the matrices $\Gamma_k\in\R^{d\times d}$ are symmetric and positive definite and the $\varepsilon_k$ are multivariate $\calN(0,I_d)$-distributed and independent of the Brownian motions in the market and of~$\mu_0$. The sequence of information dates $(T_k)_{k\in I}$ is also independent of the $(\varepsilon_k)_{k\in I}$ and the Brownian motions as well as of $\mu_0$. In particular, given $\mu_{T_k}$ the expert opinion is multivariate $\calN(\mu_{T_k},\Gamma_k)$-distributed.
The model then gives rise to various investor filtrations $\mathbb{G}=(\calG_t)_{t\in[0,T]}$. We consider the cases
\begin{alignat*}{3}
	&\mathbb{G}=\mathbb{F}^N && =(\calF^N_t)_{t\in[0,T]} && \text{ where } \calF^N_t=\sigma(\calN_{\mathbb{P}}), \\
	&\mathbb{G}=\mathbb{F}^R && =(\calF^R_t)_{t\in[0,T]} && \text{ where } \calF^R_t=\sigma((R_s)_{s\in[0,t]})\vee\sigma(\calN_{\mathbb{P}}), \\
	&\mathbb{G}=\mathbb{F}^E && =(\calF^E_t)_{t\in[0,T]} && \text{ where } \calF^E_t=\sigma((T_k,Z_k)_{T_k\leq t})\vee\sigma(\calN_{\mathbb{P}}), \\
	&\mathbb{G}=\mathbb{F}^C && =(\calF^C_t)_{t\in[0,T]} && \text{ where } \calF^C_t=\sigma((R_s)_{s\in[0,t]})\vee\sigma((T_k,Z_k)_{T_k\leq t})\vee\sigma(\calN_{\mathbb{P}})
\end{alignat*}
for the investor filtrations, where we write $\calN_{\mathbb{P}}$ for the set of null sets under $\mathbb{P}$, i.e.\ we work with the filtrations that are augmented by null sets. We speak of the investor with filtration $\mathbb{F}^H$, $H\in\{N,R,E,C\}$, as the $H$-investor. Note that the $N$-investor observes neither returns nor expert opinions and only has knowledge about the market parameters. The $R$-investor observes only the return process, the $E$-investor only the discrete-time expert opinions, and the $C$-investor the combination of both.

\begin{example}
	Based on one realization of the model's stochastic processes, fixing one information setting $H\in\{N,R,E,C\}$, we obtain one realization of the filter, leading to a time-dependent uncertainty set $K^H$. For illustration purposes we plot in Figure~\ref{fig:uncertainty_sets_various_investor_filtrations} against time a realization of the different filters with resulting uncertainty sets in a market with $d=1$ stock.
	
	The various subplots are all based on the same realization of the drift process $\mu$, returns $R$ and expert opinions $Z_k$. As a first case we consider in Figure~\ref{subf:H=E_n=0} the degenerate information setting $H=N$, corresponding to an investor who observes neither the return process nor the expert opinions. The only knowledge the investor has are the model parameters. The conditional mean is in this case constantly equal to the long-term mean $\delta$ of the drift process. The resulting uncertainty set converges very fast to a fixed interval centered around $\delta$.
	
	For $H=R$, the uncertainty set moves up and down along with the conditional mean as can be seen in Figure~\ref{subf:H=R}. In Figures~\ref{subf:H=E_n=10} and~\ref{subf:H=C_n=10} we have equidistant information dates with expert opinions. The corresponding uncertainty set jumps at information dates along with the conditional mean, due to the updates caused by an incoming expert opinion. In the case $H=E$ shown in Figure~\ref{subf:H=E_n=10}, no further information between the arrival times of the expert opinions is used. Therefore, the filter evolves deterministically in direction of the long-term mean between the jump times. In the case $H=C$ in Figure~\ref{subf:H=C_n=10}, in addition to the expert opinions  the return observations are used.  It also becomes apparent from the plots that the conditional variance decreases at information dates, leading to a shrinking uncertainty set.
	
	Neither of the information filtrations leads to a perfect uncertainty set in the sense that the true drift stays in that uncertainty set at any point in time. By the setup of the uncertainty set there is always a positive probability that the true drift process moves out of the uncertainty set at some point in time.
	
	 \begin{figure}[ht]
	 	\begin{subfigure}{.44\textwidth}
	 		\includegraphics{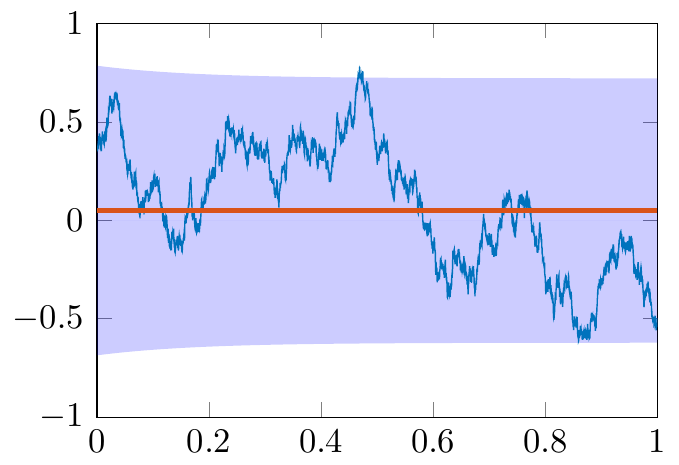}
	 		\caption{$\mathbb{G}=\mathbb{F}^N$}\label{subf:H=E_n=0}
	 	\end{subfigure}%
	 	\begin{subfigure}{.44\textwidth}
	 		\includegraphics{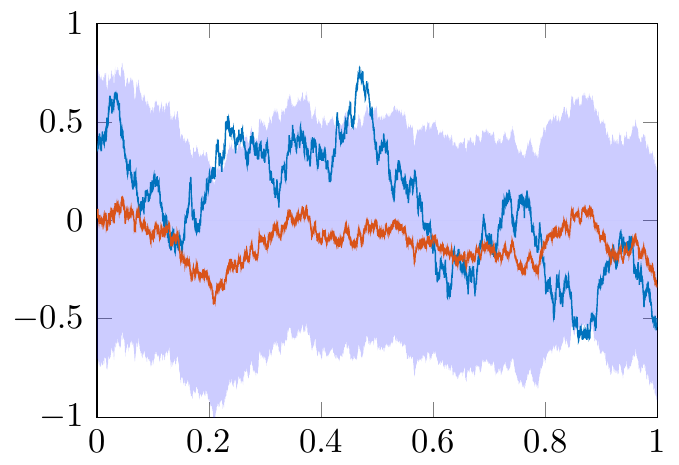}
	 		\caption{$\mathbb{G}=\mathbb{F}^R$}\label{subf:H=R}
	 	\end{subfigure}%
	 	\begin{subfigure}{.12\textwidth}
	 		\centering
	 		\includegraphics{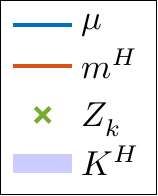}
	 	\end{subfigure}
	 	\newline
	 	\begin{subfigure}{.44\textwidth}
	 		\includegraphics{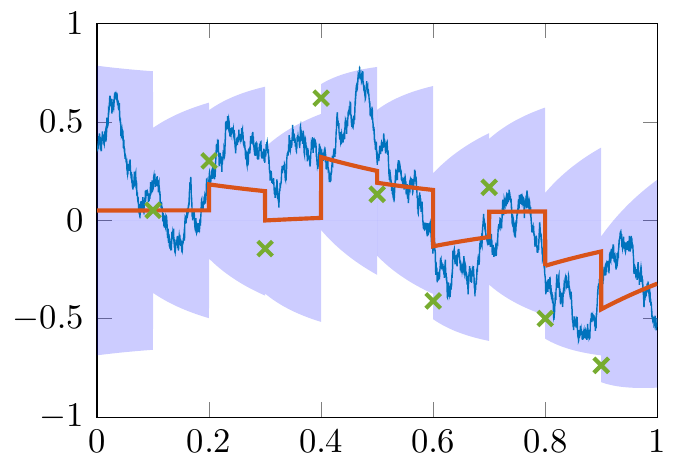}
	 		\caption{$\mathbb{G}=\mathbb{F}^E$}\label{subf:H=E_n=10}
	 	\end{subfigure}%
	 	\begin{subfigure}{.44\textwidth}
	 		\includegraphics{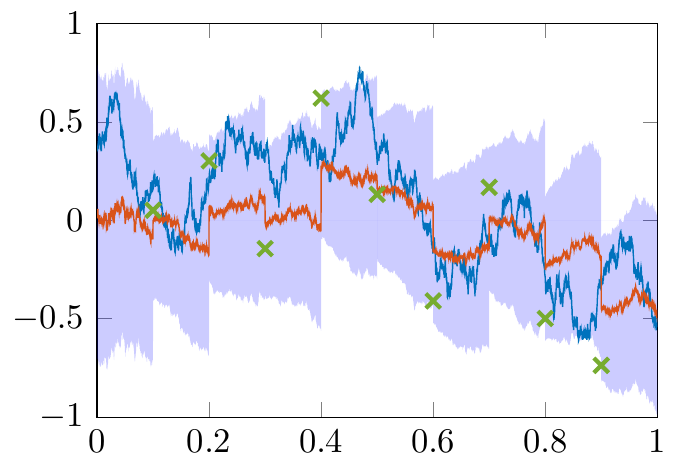}
	 		\caption{$\mathbb{G}=\mathbb{F}^C$}\label{subf:H=C_n=10}
	 	\end{subfigure}
	 	
	 	\caption{Uncertainty sets based on filters for various investor filtrations $\mathbb{F}^H$. Each subplot is based on the same realization of the drift and return process and expert opinions. Based on this realization, the filter of the $H$-investor can be computed. The uncertainty set $K^H$ is then determined according to the filter realization.}
	 	\label{fig:uncertainty_sets_various_investor_filtrations}
	 \end{figure}

%
\end{example}

We now give a numerical example to illustrate the effect that the worst-case optimization among uncertainty sets created from filters has for the various investor filtrations considered before. We create for a fixed realization of the drift process $\mu$, of the return process $R$ and the expert opinions $Z_k$ a time-dependent uncertainty set for each of the corresponding filters. The aim is to compare the robust strategies that take into account model uncertainty with the ``naive'' strategies that rely on the respective drift estimates, only.

\paragraph{Model parameters.}
We want to apply our worst-case utility maximization problem, in particular also imposing the constraint $\langle\pi_t,\ones\rangle=h$ on the investor's strategies. For that purpose we take a market with $d=2$ stocks here. We fix an investment horizon of $T=1$ and take $h=1$. Moreover, we assume that investors start with an initial wealth of $x_0=1$, use power utility functions $U_\gamma$ with $\gamma=0.5$ and a confidence level $\eta=0.1$ to create their uncertainty sets. Further parameters of the market are given in Table~\ref{tab:market_parameters_for_utility_study}.

\begin{table}[ht]
	\centering
	\begin{tabular}{llll}
		\hline
		\addlinespace[1ex]
		mean reversion speed of drift process	&$\alpha$	&$=$	&$\begin{pmatrix}
										 3 & 0\\
										 0 & 2
										 \end{pmatrix}$ \\
		\addlinespace[1ex]
		volatility of drift process		&$\beta$	&$=$	&$\begin{pmatrix}
										 0.50 & 0.25\\
										 0.25 & 0.50
										 \end{pmatrix}$ \\
		\addlinespace[1ex]
		mean reversion level of drift process	&$\delta$	&$=$	&$\begin{pmatrix}
										 0.02\\
										 0.03
										 \end{pmatrix}$ \\
		\addlinespace[1ex]
		initial mean of drift process		&$m_0$		&$=$	&$\begin{pmatrix}
										 0.02\\
										 0.03
										 \end{pmatrix}$ \\
		\addlinespace[1ex]
		initial variance of drift process	&$\Sigma_0$	&$=$	&$\begin{pmatrix}
										 0.01 & 0\\
										 0 & 0.01
										 \end{pmatrix}$ \\
		\addlinespace[1ex]
		volatility of returns			&$\sigma_R$	&$=$	&$\begin{pmatrix}
										 0.10 & 0.05\\
										 0.05 & 0.01
										 \end{pmatrix}$ \\
		\addlinespace[1ex]
		volatility of continuous expert		&$\sigma_J$	&$=$	&$\begin{pmatrix}
										 0.10 & 0.05\\
										 0.05 & 0.01
										 \end{pmatrix}$ \\
		\addlinespace[1ex]\hline
	\end{tabular}
	\caption{Market parameters for numerical example.}\label{tab:market_parameters_for_utility_study}
\end{table}

\paragraph{Simulation study.}
For the given model parameters we simulate a drift process, the return process $R$ and $n=10$ discrete-time expert opinions arriving at deterministic and equidistant information dates on $[0,T]$. We then obtain a realization of the filters $(\muhat{H}{}, \gam{H}{})$ for any of the information settings $H$ from above. As before, this leads to one time-dependent uncertainty set for each of the investors.

We can then determine the worst-case drift process $(\mu^*_t)_{t\in[0,T]}$ and the optimal strategy $(\pi^*_t)_{t\in[0,T]}$ that is realized by the investor who solves at each time point the local optimization problem
\[ \adjustlimits \sup_{\pi^{(t)}\in\calA_h(t,x)} \inf_{\mu^{(t)}\in \calK^{(t)}} \E_{\mu^{(t)}}\Bigl[U_\gamma\bigl(X^{t,x,\pi^{(t)}}_T\bigr)\Bigr], \]
where the information at $t$ is now given by $\calG_t = \calF^H_t$. 
Recall that $(\mu^*_t)_{t\in[0,T]}$ and $(\pi^*_t)_{t\in[0,T]}$ are calculated from the solutions of the local optimization problems via
\[ \pi^*_t=\pi^{(t),*}_t, \quad \mu^*_t=\mu^{(t),*}_t \]
for all $t\in[0,T]$.
The value of each investor's worst-case optimization is then equal to (writing now  $X_T^\pi$ for $X_T^{0,x_0,\pi}$)
\begin{equation}\label{eq:worst-case_expected_utility}
	\E_{\mu^*}\bigl[U_\gamma(X^{\pi^*}_T)\bigr].
\end{equation}
The quantity in~\eqref{eq:worst-case_expected_utility} is the worst-case expected utility from the $H$-investor's point of view when using the robust strategy $\pi^{*}$. For comparison, we also compute
\[ \E_{\mu^*}\bigl[U_\gamma(X^{\hat{\pi}}_T)\bigr], \quad \E_{\nu}\bigl[U_\gamma(X^{\pi^*}_T)\bigr] \quad\text{and}\quad \E_{\nu}\bigl[U_\gamma(X^{\hat{\pi}}_T)\bigr], \]
where $\nu=\muhat{H}{}$ is the conditional mean of the $H$-investor's filter and $\hat{\pi}$ is the corresponding optimal strategy given that the drift equals $\muhat{H}{}$, i.e.\ 
\[ \hat{\pi}_t=\frac{1}{1-\gamma}A\muhat{H}{t}+hc. \]

We repeat this simulation $10\,000$ times where in each iteration a new drift process, a new return process and new expert opinions are simulated based on the parameters given above. Table~\ref{tab:comparison_of_utility_for_different_investors} gives the sample mean of the various expected utilities over all simulations and in brackets the corresponding sample standard deviation.

\begin{table}[ht]
	\centering
	\begin{tabular}{cc|r|r|r|r}
		$H$	&$n$	&\multicolumn{1}{c}{$\E_{\mu^*}\bigl[U_\gamma(X^{\pi^*}_T)\bigr]$}	&\multicolumn{1}{|c}{$\E_{\mu^*}\bigl[U_\gamma(X^{\hat{\pi}}_T)\bigr]$}	&\multicolumn{1}{|c}{$\E_{\nu}\bigl[U_\gamma(X^{\pi^*}_T)\bigr]$}	&\multicolumn{1}{|c}{$\E_{\nu}\bigl[U_\gamma(X^{\hat{\pi}}_T)\bigr]$}\bstrut\\\hline
		$N$	&	&1.6179 \textit{(0.0000)}	&1.5996 \textit{(0.0000)}	&2.0196 \textit{\phantom{00}(0.0000)}	&2.0426 \textit{\phantom{00\,00}(0.0000)}\tstrut\\
		$R$	&	&1.7086 \textit{(0.1057)}	&0.7754 \textit{(0.3737)}	&2.2362 \textit{\phantom{00}(2.4692)}	&25.9029 \textit{\phantom{00\,}(732.4104)}\\
		$E$	&$10$	&1.7055 \textit{(0.1117)}	&0.8170 \textit{(0.3870)}	&2.2393 \textit{\phantom{00}(3.4208)}	&21.1610 \textit{\phantom{00\,}(530.6829)}\\
		$C$	&$10$	&1.7854 \textit{(0.4027)}	&0.6891 \textit{(0.3752)}	&4.5313 \textit{(134.5858)}	&264.0838 \textit{(19\,288.2826)}\\
	\end{tabular}
	\caption{Comparison of utility for different investors.}\label{tab:comparison_of_utility_for_different_investors}
\end{table}

\paragraph{Observations.}
When comparing the worst-case expected utility $\E_{\mu^*}[U_\gamma(X^{\pi^*}_T)]$ among the investors we see that the information setting $H=N$, which corresponds to only knowing the model parameters, gives the lowest value. The observation of returns or of $n=10$ expert opinions increases this value. The combination of return observation and discrete-time expert opinions yields a considerably larger worst-case expected utility.

In the next column, $\E_{\mu^*}[U_\gamma(X^{\hat{\pi}}_T)]$ measures the expected utility when using strategy $\hat{\pi}$, given that the true drift is actually the worst-case drift $\mu^*$. The values are in any case smaller than the corresponding expected utility when using the robust strategy $\pi^*$. What is striking is that the information setting $H=N$, i.e.\ only knowledge of the model parameters, gives the best expected utility here. Adding more information, from return observations or expert opinions, and using the optimal strategy based on the filter leads to a smaller worst-case expected utility. This shows that for the worst-case optimization problem it is dangerous for investors to rely on their estimates of the drift, i.e.\ the conditional mean of the filter, only. They need to robustify their strategy by taking into account model uncertainty to be able to profit from any additional information. This effect can be linked to overconfidence of experts as studied empirically by Heath and Tversky~\cite{heath_tversky_1991}, seeing that more knowledge about the drift process leads to a worse expected utility in the non-robust case due to taking more risky strategies.

The last two columns show the expected utility when using strategy $\pi^*$, respectively $\hat{\pi}$, given that the true drift was actually the conditional mean $\nu=\muhat{H}{}$. Of course, when compared to the expected utility given the worst-case drift $\mu^*$, the expected utility given $\nu$ is much higher. Not surprisingly, the performance of $\hat{\pi}$ given drift $\nu$ is on average extremely good. However, we also notice the very large sample standard deviation. In comparison to that, we see that the robust strategies $\pi^*$ perform reasonably well given drift $\nu$, even though they are tailored for the worst-case drift in the respective uncertainty set. At the same time, the sample standard deviation is much smaller than for strategy $\hat{\pi}$.

\paragraph{Conclusions.}
In conclusion, we see that a surplus of information, either from return observations or expert opinions, results in better strategies in general. However, investors do need to account for model uncertainty by choosing a robustified strategy $\pi^*$ instead of relying on the respective filter only. The naive strategy $\hat{\pi}$ performs extremely well if the true drift coincides with the conditional mean $\muhat{H}{}$, but it is much more vulnerable to model misspecifications than the robust strategy~$\pi^*$.


\end{document}